\PassOptionsToPackage{dvipsnames,table,xcdraw}{xcolor}
\RequirePackage{pgf}
\documentclass[sn-mathphys-num]{sn-jnl}


\usepackage{graphicx}%
\usepackage{amsmath,amssymb,amsfonts}%
\usepackage{amsthm}%
\usepackage{mathrsfs}%
\usepackage{multirow}
\usepackage[title]{appendix}%
\usepackage{nicematrix}
\usepackage{textcomp}%
\usepackage{manyfoot}%
\usepackage{booktabs}%
\usepackage{algorithm}%
\usepackage{algorithmicx}%
\usepackage{algpseudocode}%
\usepackage{listings}%
\usepackage{cleveref}
\usepackage[vskip=1.05mm,font=itshape]{quoting}
\usepackage{changepage}   
\usepackage{float}
\usepackage{enumitem}

\theoremstyle{thmstyleone}%
\newtheorem{theorem}{Theorem}
%

\theoremstyle{thmstyletwo}%
\newtheorem{remark}{Remark}%
\newtheorem{lemma}{Lemma}%
\newtheorem{corollary}{Corollary}%

\theoremstyle{thmstylethree}%
\newtheorem{definition}{Definition}%




\newcommand{\stronglyhard}[1]{\textcolor{BrickRed}{Strongly {\sf NP}-hard [#1]}}
\newcommand{\complete}[1]{\textcolor{BrickRed}{{\sf NP}-complete [#1]}}
\newcommand{\easy}[1]{\textcolor{ForestGreen}{#1}}

\newcommand{\asym}{\text{Asym}}
\newcommand{\minmax}{\text{MinMax}}
\newcommand{\maxmin}{\text{MaxMin}}
\newcommand{\decision}{\text{Dec}}

\newcommand{\splititemempty}{$\textsc{SplitItem}$}
\newcommand{\splittingempty}{$\textsc{Splitting}$}
\newcommand{\intervalempty}{$\textsc{Inter}$}
\newcommand{\partitionempty}{$\textsc{Part}$}

\newcommand{\partitionunbounded}[2]{#1$\textsc{-Part}($#2$)$}

\newcommand{\splititem}[3]{#1$\textsc{-SplitItem}[$#2$]($#3$)$}
\newcommand{\splitting}[3]{#1$\textsc{-Splitting}[$#2$]($#3$)$}
\newcommand{\interval}[3]{#1$\textsc{-Inter}[$#2$]($#3$)$}
\newcommand{\partition}[3]{#1$\textsc{-Part}[$#2$]($#3$)$}

\newcommand{\sumratios}{A}

\makeatletter
\DeclareRobustCommand*\cal{\@fontswitch\relax\mathcal}
\makeatother

\raggedbottom

\begin{document}

\title[Asymmetric Number Partitioning with Splitting and Interval Targets]{Asymmetric Number Partitioning with Splitting and Interval Targets \footnote{This is an extension of a paper that has been presented at the 35th International Symposium on Algorithms and Computation, (ISAAC) and appeared in its proceedings \cite{DBLP:conf/isaac/BismuthMSS24}. The earlier version focused solely on positive results concerning identical bins. We present here a generalization of these results, for asymmetric bins. This version also contains all proofs
omitted from the previous version.}}


\author*[1]{\fnm{Samuel} \sur{Bismuth}}\email{samuelbismuth101@gmail.com}
\equalcont{These authors contributed equally to this work.}


\author[1]{\fnm{Erel} \sur{Segal-Halevi}}\email{erelsgl@gmail.com}
\equalcont{These authors contributed equally to this work.}

\author[1]{\fnm{Dana} \sur{Shapira}}\email{shapird@g.ariel.ac.il}
\equalcont{These authors contributed equally to this work.}

\affil*[1]{\orgdiv{Department of Computer Science}, \orgname{Ariel University}, \orgaddress{\city{Ariel}, \postcode{40700},  \country{Israel}}}



\abstract{The $n$-way number partitioning problem, a fundamental challenge in combinatorial optimization, has significant implications for applications such as fair division and machine scheduling. Despite the fact that these problems are NP-hard, numerous approximation techniques have been developed.
We consider three closely related kinds of approximations, and various objectives such as decision, min-max, max-min, and even a generalized objective, in which the bins are not considered identical anymore, but rather asymmetric (used to solve fair division to asymmetric agents or uniform machine scheduling problems).

The first two variants optimize the partition such that: in the first variant some fixed number $s$ of items can be \emph{split} between two or more bins and in the second variant we allow at most a fixed number $t$ of \emph{splittings}. The third variant is a decision problem: the largest bin sum must be within a pre-specified interval, parameterized by a fixed rational number $u$ times the largest item size.

When the number of bins $n$ is unbounded, we show that every variant is strongly {\sf NP}-complete. 
When the number of bins $n$ is fixed, the running time depends on the fixed parameters $s,t,u$. For each variant, we give a complete picture of its running time.

For $n=2$, the running time is easy to identify. 
Our main results consider any fixed integer $n \geq 3$. 
Using a two-way polynomial-time reduction between the first and the third variant,
we show that $n$-way number-partitioning with $s$ split items can be solved in polynomial time if $s \geq n-2$, and it is {\sf NP}-complete otherwise. Also, $n$-way number-partitioning with $t$ splittings can be solved in polynomial time if $t \geq n-1$, and it is {\sf NP}-complete otherwise. Finally, we show that the third variant can be solved in polynomial time if $u \geq (n-2)/n$, and it is {\sf NP}-complete otherwise.
Our positive results for the optimization problems consider both asymmetric min-max and asymmetric max-min versions.

}

\keywords{Number Partitioning, Fair Division, Entitlement, Uniform Machine Scheduling}



\maketitle

\section{Introduction}
The $n$-way number partitioning problem is a fundamental problem in operations research, and has been extensively studied across diverse contexts and applications \cite{DBLP:journals/anor/StorerFW96}, \cite{DBLP:journals/eor/PedrosoK10}, \cite{DBLP:journals/orf/KalczynskiGD23}, \cite{DBLP:journals/ior/AnilyF91}, \cite{DBLP:journals/cor/FariaSS21}.

In the classic setting, the inputs are a list ${\cal X}=(x_1, \ldots, x_m)$ of $m$ non-negative integers and a number $n$ of bins, and the required output is an $n$-way partition (a partition of the integers into $n$ bins) in which the sums of integers in the bins are \emph{balanced} in some sense.  

In the \emph{decision} version of the problem, the objective is to decide whether there exists an $n$-way partition of ${\cal X}$ such that every bin sum is exactly equal to $\sum_j x_j / n$ (we call it a \emph{perfect partition}). In the \emph{min-max} optimization version, the objective is to find an $n$-way partition such that the maximum bin sum is minimized, while in the \emph{max-min} optimization version, the goal is to maximize the smallest bin sum. 

Whereas in the classic objectives, all the bins are considered identical, we also consider a generalized objective that we call \emph{asymmetric},
parameterized by a list of positive integer ${\cal R} = (r_1, \dots, r_n)$, where $r_i$ is the ratio of bin $i$.
The goal in asymmetric min-max is to minimize the maximum \emph{relative sum} of a bin, which is the bin sum divided by its ratio. Asymmetric max-min is defined analogously.
The classic problems correspond to the special case in which all ratios equal 1.

For tightness, we show negative results for the  classic objectives (which are special cases of the asymmetric objectives), and show positive results for the asymmetric objectives (which are generalizations of the classic objectives).


For each problem of this paper, the problem objective is mentioned first, the fixed parameters are in square brackets, and the problem inputs are in parenthesis. Let us formally define the decision, min-max and asymmetric min-max versions of the $n$-way number partitioning problem, where $n$ is a fixed parameter:
\begin{quoting}
\partition{\decision}{$n$}{${\cal X}$}:
\quad
Decide if there exists a partition of ${\cal X}$ among $n$ bins with sums $b_1,\ldots,b_n$ such that $\max(b_1, \dots, b_n) \leq S$ where $S := \sum_j x_j / n$.
\end{quoting}
\begin{quoting}
\partition{\minmax}{$n$}{${\cal X}$}:
\quad
Minimize $\max(b_1,\ldots, b_n)$,
where $b_1,\ldots,b_n$ are sums of bins in an $n$-way partition of ${\cal X}$.
\end{quoting}
\begin{quoting}
\partition{\asym\minmax}{$n$}{${\cal X}, {\cal R}$}:
\quad
Minimize $\max(b_1,\ldots, b_n)$,
where $b_1,\ldots,b_n$ are \emph{relative} sums (the sum of bin $i$ divided by its ratio $r_i$) of bins with ratios ${\cal R}$ in an $n$-way partition of ${\cal X}$.
\end{quoting}
For every fixed $n \geq 2$ \partition{\decision}{$n$}{${\cal X}$} is known to be {\sf NP}-hard \cite{gareycomputers}; hence; the min-max and max-min variants are {\sf NP}-hard too. 
When $n$ is unbounded (part of the input), \partitionunbounded{\decision}{$n, {\cal X}$} 
is known to be strongly {\sf NP}-hard (it is equivalent to 3-partition) \cite{DBLP:journals/siamcomp/GareyJ75}. In addition, many instances of the decision version are negative (there is no perfect partition). These negative results motivate us to investigate variants of the $n$-way number partitioning problem, for which the running time complexity is better, and the number of positive instances (admitting a perfect partition) is larger. 
We present three variants, that relax the initial problem to solve our concerns.
The first two variants, \splititemempty{} and \splittingempty{} allow ``divisible'' items, bounded by some natural numbers $s$ and $t$. We define the decision and the asymmetric min-max versions as follows:
\begin{quoting}
\splititem{\decision}{$n, s$}{${\cal X}$}:
\quad
Decide if there exists a partition of ${\cal X}$ among $n$ bins with sums $b_1,\ldots, b_n$ with at most $s$ split items, such that $\max(b_1,\ldots,b_n)\leq S$ where $S := \sum_j x_j / n$.
\end{quoting}
\begin{quoting}
\splititem{\asym\minmax}{$n, s$}{${\cal X}, {\cal R}$}:
\quad
Minimize $\max(b_1,\ldots, b_n)$,
where $b_1,\ldots,b_n$ are relative sums (the sum of bin $i$ divided by its ratio $r_i$) of bins with ratios ${\cal R}$ in an $n$-way partition of ${\cal X}$ in which at most $s$ items are split.
\end{quoting}
\begin{quoting}
\splitting{\decision}{$n, t$}{${\cal X}$}:
\quad
Decide if there exists a partition of ${\cal X}$ among $n$ bins with sums $b_1,\ldots, b_n$ with at most $t$ splittings, such that $\max(b_1,\ldots,b_n)\leq S$ where $S := \sum_j x_j / n$.
\end{quoting}
\begin{quoting}
\splitting{\asym\minmax}{$n, t$}{${\cal X}, {\cal R}$}:
\quad
Minimize $\max(b_1,\ldots, b_n)$,
where $b_1,\ldots,b_n$ are relative sums (the sum of bin $i$ divided by its ratio $r_i$) of bins with ratios ${\cal R}$ in an $n$-way partition of ${\cal X}$ in which at most $t$ splittings are allowed.
\end{quoting}
The number of splittings is at least the number of split items but it might be larger. For example, a single item split into $10$ different bins counts as $9$ splittings. 
Note that the problem definitions do not determine in advance which items will be split, but only bound their number, or bound the number of splittings. The solver may decide which items to split after receiving the input. 

Our motivating application for the variants comes from fair division and machine scheduling. For fair division, some $m$ items with market values $x_1,\ldots,x_m$ have to be divided among $n$ asymmetric agents. A \emph{perfect partition} is one in which each agent $i$ receives a total value of exactly $r_i \cdot S$. When the original instance does not admit a perfect partition, we may want to \emph{split} one or more items among two or more agents. Or, we can allow some \emph{splittings}. Divisible items are widespread in fair division applications: the ownership of one or more items may be split to attain a perfectly fair partition. However, divisible items may be inconvenient or expensive. Therefore, the number of split items or splittings should be bounded. The same is true for uniform machine scheduling, in which agents are considered as machines, and items as jobs. A job can be divided and processed by two machines simultaneously. The following examples tackle real-life fair division to asymmetric agents or uniform machines scheduling problems. 

    (1) Consider a firm that owns $m=2$ houses  with values ${\cal X} = (13,3)$. 
    The firm belongs to $n=2$ partners with rights ${\cal R} = (3, 1)$. 
    The firm dissolves, and the partners need to divide the property among them.
    Without splitting, one partner necessarily gets the house with a value of $13$, which is worth more than $3/4$ of the total estate. So, if all houses are considered discrete, then an equal division is not possible. 
    If all houses can be split, then an equal division is easy to attain by giving 75\% of every house to the partner with the right of 3 and 25\% of every house to the second partner, but it is inconvenient since it requires all houses to be jointly managed. A solution often used in practice is to decide in advance that \emph{a single} house can be split (or only one splitting is allowed). In this case, after receiving the input, we can determine that splitting the house with a value of $13$ (such that the partners get $(12, 3+1)$) lets us attain a division in which each partner receives the same relative sum of $4$. 

    (2) Consider a food factory with $n=3$ chopping machines with speeds ${\cal R} = (2,1,1)$, who has to cut $m=4$ vegetables with processing times ${\cal X} = (22, 7, 4, 3)$ minutes. 
    Each job is divisible as one vegetable may be cut in different machines, but splitting a job is inconvenient since it requires washing more dishes. 
    Without splitting, the minimum total processing time is 11 minutes: $(22)/2, (7)/1, (4+3)/1$. By splitting the vegetable with processing time $22$ into three different machines, the processing time is 9 minutes: $(18)/2, (7+2)/1, (4+3+2)/1$.
    
The third variant \intervalempty{} only admits a decision version, parameterized by a rational number $u \geq 0$:
\begin{quoting}
\interval{\asym\decision}{$n, u$}{${\cal X}, {\cal R}$}:
\quad
Decide if there exists a partition of ${\cal X}$ among $n$ bins with ratios {\cal R} and relative sums $b_1,\ldots, b_n$ such that $S \leq \max(b_1,\ldots,b_n) \leq S + u \cdot M$, where $S := (\sum_j x_j)/(\sum_i r_i)$ and  $M := (\max_j x_j)/(\sum_i r_i)$. 
\end{quoting}
In general, the runtime complexity of this problem depends on the size of the allowed interval (i.e., the interval $[S, S + u \cdot M$]): the problem is {\sf NP}-complete when the interval is ``small'' and in {\sf P} when the interval is ``large''. 
We notice that, if we could solve \intervalempty{} for any interval length in polynomial-time, then by binary search we could solve \partitionempty{} in polynomial-time; which is not possible unless {\sf P}={\sf NP}. So in \intervalempty{}, we look for the smallest interval for which we can decide in polynomial time whether it contains a solution.

As an application example, consider the fair allocation of indivisible items among two asymmetric agents. Suppose there is a small amount of money, that can be used to compensate for a small deviation from equality in the allocation. But if the deviation is too big, the agents prefer to find another solution. We can use \intervalempty{} check the feasibility of compensation, by letting the interval length be the amount of money available.

The \intervalempty{} variant is closely related to 
the well-known {\it Fully Polynomial-Time Approximation Scheme} ({\sf FPTAS}) definition:
\begin{definition}
An {\sf FPTAS} for \partition{\asym\minmax}{$n$}{${\cal X}, {\cal R}$} is an algorithm that finds, for each rational $\epsilon>0$, an $n$-way partition of ${\cal X}$ with 
$OPT \leq \max(b_1,\ldots,b_n) \leq (1+\epsilon) \cdot OPT$, 
where OPT is the smallest possible value of $\max(b_1,\ldots,b_n)$ in the given instance, in time $O(poly(m,1/\epsilon, \log (\sum_j x_j))))$.
\end{definition}
An {\sf FPTAS} finds a solution for which the relative deviation from optimality depends on the optimal \emph{integral} solution, OPT. 
In contrast, the \interval{\asym\decision}{$n, u$}{${\cal X}, {\cal R}$} problem
can equivalently be phrased as: decide if there exists a partition such that $S\leq \max(b_1,\ldots,b_n)\leq (1+\delta)\cdot S$, for $\delta := uM/S$.
That is, the relative deviation from optimality depends on the optimal \emph{fractional} solution, $S$.
Although the problems are different, we will use {\sf FPTAS}-s as tools for solving the \intervalempty{} problem.


\paragraph*{Contribution}
When $s,t,u=0$, \splititemempty{}, \splittingempty{} and \intervalempty{} decision versions are equivalent to the {\sf NP}-hard \partitionempty{} decision version. In contrast, when $s, t \geq n-1$ the problem, even with asymmetric objectives, is easily solvable by the following algorithm: put the items on a line, cut the line into $n$ pieces such that the length of line $i$ is $r_i \cdot S$, and put the corresponding piece in bin $i$. Since $n-1$ cuts are made, at most $n-1$ items need to be split. 
So for $n=2$, the runtime complexity of the \splititem{\asym\decision}{$n, s$}{${\cal X}, {\cal R}$} and \splitting{\asym\decision}{$n, t$}{${\cal X}, {\cal R}$} problem is well-understood (assuming ${\sf P}\neq {\sf NP}$): it is polynomial-time solvable if and only if $s, t \geq 1$. 
The case for \intervalempty{} is slightly different since $u$ is a rational number. We summarize all our results in \Cref{tab:nagents}.
\begin{table*}
\begin{center}
\small
\setlength{\tabcolsep}{3pt}
\begin{NiceTabular}{|l|l|l|l|l|}

\hline
\textbf{Problem} & \textbf{Objective} & \textbf{Nb of bins} & \textbf{Bound} & \textbf{Run-time complexity} \\ \hline \hline

\partitionempty & \decision & Unbounded & $s=t$ & \stronglyhard{\cite{DBLP:journals/siamcomp/GareyJ75}} \\ 
\cmidrule{3-3} \cmidrule{5-5}
&  & Const $n$ & $=u=0$ & \complete{\cite{gareycomputers}} \\
\hline \hline

\splititemempty & \decision & Unbounded & $s$ (any) & \stronglyhard{Cor. \ref{cor:splitstronglyhard}} \\ \cmidrule{3-5}
&  & Const $n \geq 3$ & $s < n-2$ & \complete{\Cref{cor:splithard}}\\  \cmidrule{2-2} \cmidrule{4-5}
& \asym\minmax &  & $s \geq n-2$ & \easy{$O(\emph{poly}(m, \log{(\sum_j x_j)}))$ [Cor. \ref{cor:nsplitparmm}]} \\  \cmidrule{2-2} \cmidrule{4-5}
& \asym\maxmin &  & $s \geq n-2$ & \easy{$O(\emph{poly}(m, \log{(\sum_j x_j)}))$ [Thm. \ref{thm:maxminalgo}]} \\  \cmidrule{2-2} \cmidrule{3-5}
& All & Const $n \geq 2$ & $s \geq n-1$ & \easy{$O(m+n)$ [cut-the-line]} \\ 
\hline \hline

\splittingempty & \decision & Unbounded & $t$ (any) & \stronglyhard{Thm. \ref{thm:splittingstrong}} \\ \cmidrule{3-5}
&  & Const $n$ & $t < n-1$ & \complete{\Cref{thm:sharing}} \\  \cmidrule{2-2} \cmidrule{4-5}
& All & & $t \geq n-1$ & \easy{$O(m+n)$ [cut-the-line]} \\ 
\hline \hline

\intervalempty & \decision & Unbounded & $u$ (any) & \stronglyhard{Thm. \ref{thm:n-input-hard}} \\ \cmidrule{3-5}
&  & Const $n \geq 3$ & $u < n-2$ & \complete{\Cref{thm:np-complete-d-possible}} \\ \cmidrule{2-2} \cmidrule{4-5}
& \asym\decision &  & $u \geq n-2$ & \easy{$O(\emph{poly}(m, \log{(\sum_j x_j)}))$ [Thm. \ref{thm:n>=3}]} 
\\
\cmidrule{3-5}
&  & Const $n\geq 2$ & $u \geq n-1$ & \easy{$O(m+n)$ [cut-the-line+Thm. \ref{thm:from-interval-to-shared}]} 
\\
\cmidrule{3-5}
&  & $n=2$ & $u > 0$ & \easy{$O(\emph{poly}(m, \log{(\sum_j x_j)}, 1/ u))$ [Cor. \ref{cor:u-instead-of-v}]}
\\ 
\hline 
\end{NiceTabular}
\end{center}
\caption{\label{tab:nagents} 
Run-time complexity of the $n$-way number partitioning variants. 
In \splititemempty, $s$ (an integer) is the number of items the algorithm is allowed to split.
In \splittingempty, $t$ (an integer) is the number of splittings the algorithm is allowed to make.
In \intervalempty, $u$ (a rational number) is the ratio between the allowed interval length and $M$.
}
\end{table*}

In \Cref{sub:equivalence} we show a two-way polynomial-time reduction between problems \splititemempty{} and \intervalempty{}.
This reduction is the key for many of our results. We use it to handle the case where the number of split items is smaller than $n-2$. First, we design an {\sf FPTAS} for \splititem{\asym\minmax}{$n, s$}{${\cal X}, {\cal R}$} in \Cref{sec:approximations}. Second, we develop a practical (not polynomial-time) algorithm, for solving \splititem{\minmax}{$n, s$}{${\cal X}$} for any $s \geq 0$. The algorithm can use any practical algorithm for solving the \partition{\minmax}{$n$}{${\cal X}$} problem.
The latest helps us in \Cref{sec:experiments} to conduct some experiences to various randomly generated instances and analyze the effect of $s$ on the quality of the attained solution. 
The supplement provides complementary results and technical proof details omitted from the main text. 


\paragraph*{Techniques}
The main results of our paper are based on {\sf FPTAS} for several number partitioning problems.
Woeginger \cite{woeginger2000does} gives a general method for converting a dynamic program to an {\sf FPTAS}. He shows that his method can be used to design {\sf FPTAS} for hundreds of different combinatorial optimization problems. We apply his method to several different problems.
We remark that the {\sf FPTAS} designed by Woeginger's method for partitioning problems are polynomial in $m$ and $\log{\sum_j x_j}$, but exponential in $n$;     
the same is true for our algorithms.

\section{Related Work} \label{sec:related-work}

Combinatorial optimization problems traditionally distinguish between discrete and continuous variables. E.g., when a problem is modeled by a mixed-integer program, each variable in the program is determined in advance to be either discrete (must get an integer value) or continuous (can get any real value). 
The problems we study belong to a much smaller class of problems, in which all variables are potentially continuous, but there is an upper bound on the number of variables that can be non-discrete. We describe some such problems below.

\textbf{Bounded splitting in fair division:}
The idea of finding fair allocations with a bounded number of split items originated from \cite{Brams1996Fair,brams2000winwin}.
They presented the \emph{Adjusted Winner} ({\sf AW}) procedure for allocating items among two agents with possibly different valuations.
{\sf AW} finds an allocation that is \emph{envy-free} (no agent prefers the bundle of another agent), \emph{equitable} (both agents receive the same subjective value), and \emph{Pareto-optimal} (there is no other allocation where some agent gains and no agent loses), and in addition, at most a single item is split between the agents.
Hence, {\sf AW} solves a problem that is similar to
\splititem{\decision}{$n=2, s=1$}{${\cal X}$} but more general, since {\sf AW} allows the agents to have different valuations to the same items.
{\sf AW} was applied (at least theoretically) to division problems in divorce cases and international disputes \citep{Brams1996Camp,Massoud2000Fair} and was studied empirically by  \cite{Schneider2004Limitations,Daniel2005Fair}.
The 
{\sf AW}
procedure is designed for two agents. 
For $n\geq 3$ agents, the number of split items allowed was studied in an unpublished manuscript of \cite{wilson1998fair} using linear programming techniques.
He proved the existence of an \emph{egalitarian}
allocation of goods (i.e., an allocation in which all agents have the largest possible equal utility \citep{pazner1978egalitarian}), with at most $n-1$ split items; this can be seen as a generalization of
\splititem{\decision}{$n, s=n-1$}{${\cal X}$}.

Recently, \cite{bei2020fair, DBLP:journals/corr/abs-2310-00976} studied an allocation problem where some items are divisible and some are indivisible. In contrast to our setting, in \cite{bei2020fair} the distinction between divisible and indivisible items is given in advance, that is, the algorithm can only divide items that are pre-determined as divisible. In \cite{DBLP:journals/corr/abs-2310-00976}, for each good, some agents may regard it as indivisible, while other agents may regard the good as divisible.
In our setting, only the \emph{number} of divisible items (splittings) is given in advance, but the algorithm is free to choose \emph{which} items to split after receiving the input.

Another paper \cite{segal-halevi2019fair} studies the same problems, but, whereas our paper focuses on asymmetric valuations, they give new results on binary valuations (i.e., each agent values each item as 0 or 1), generalized binary valuations (i.e., each agent values each item as 0 or $x_i$, which can be considered as the price of the item) and negative results on non-degenerate valuations, complementing the results given by \cite{sandomirskiy2019fair}. 

In recent years, researchers have explored the more generalized models of asymmetric agents.
Notable fairness criteria are extended. For instance, \emph{envy-freeness up to one item { \sf (EF1)}} has been generalized to \emph{weighted {\sf EF1} {\sf (WEF1)}} \cite{DBLP:journals/teco/ChakrabortyISZ21}, \emph{proportionality up to one item {\sf (PROP1)}} to \emph{weighted {\sf PROP1} {\sf (WPROP1)}} \cite{DBLP:journals/orl/AzizMS20}, and
\emph{maximin share {\sf (MMS)}} to \emph{weighted {\sf MMS} {\sf (WMMS)}} \cite{DBLP:journals/jair/FarhadiGHLPSSY19}.

Asymmetric agents have also been studied in the context of divisible items concerning proportionality \cite{DBLP:books/daglib/0017730}
\cite{DBLP:journals/talg/CsehF20}
\cite{DBLP:journals/tamm/Su00}
\cite{SEGALHALEVI2019123382}. Our work focuses on mixed settings, with divisible and indivisible items.

\textbf{Splitting in job scheduling:}
There are several variants of job scheduling problems in which it is allowed to break jobs apart.
They can be broadly classified into \emph{preemption} and \emph{splitting}.
In the preemption variants, different parts of a job must be processed at different times. In the three-field notation, they are denoted by "pmtn" and were first studied by \cite{10.2307/2627472}. In the splitting variants, different parts of a job may be processed simultaneously on different machines. They are denoted by "split" and were introduced by \cite{DBLP:journals/dam/XingZ00}.

Various problems have been studied in job scheduling with preemption.
The most closely related to our problem is the generalized multiprocessor scheduling (GMS).
It has two variants. 
In the first variant,
the total number of preemptions is bounded by some fixed integer.
In the second variant, each job $j$ has an associated parameter that bounds the number of times $j$ can be preempted.
In both variants, the goal is to find a schedule that minimizes the makespan subject to the preemption constraints. For identical machines, \cite{ShchepinArticle} proves that with the bound of $n-2$ on the total number of preemptions, the problem is {\sf NP}-hard, whereas \cite{10.2307/2627472} shows a linear-time algorithm with $n-1$ preemptions.
In \Cref{thm:sharing} we prove an analogous result where the bound is on the total number of splittings.
For uniform machines, 
a polynomial algorithm by \cite{DBLP:journals/jacm/GonzalezS78} yields at most $2n-2$ preemptions, and \cite{DBLP:journals/algorithmica/ShachnaiTW05} proved {\sf NP}-hardness for the case where the number of preemption is strictly less than $2n -2$.

In all the works we surveyed, there is no global bound on the number of splitting jobs. As far as we know, bounding the number of splittings or split jobs was not studied before. 

\textbf{Fractional bin-packing: }
Another problem, in which splitting was studied, is the classical bin-packing problem. Bin-packing with fragmented items is first introduced by \cite{MANDAL199891}. They called the problem fragmentable object bin-packing problem and proved that the problem 
is {\sf NP}-hard. 
It is later split into two variants. In the first variant called bin-packing with size-increasing fragmentation (BP-SIF), each item may be fragmented; overhead units are added to the size of every fragment. In the second variant called bin-packing with size-preserving fragmentation (BP-SPF) each item has a size and a cost; fragmenting an item increases its cost but does not change its size. Menakerman and Rom \cite{DBLP:conf/wads/MenakermanR01} show that BP-SIF and BP-SPF are {\sf NP}-hard in the strong sense. Despite the hardness, they present several algorithms and investigate their performance. Their algorithms use classic algorithms for bin-packing, like next-fit and first-fit decreasing, as a base for their algorithms.
Bertazzi et al. \cite{DBLP:journals/dam/BertazziGW19} introduced a variant of BP-SIF with $1-x$ split rule: an item is allowed to be split in only one way according to its size. It is useful for the vehicle routing problem for example. In their paper, they provide the worst case performance bound of the variant.

Further related work is surveyed in \Cref{sec:further-related-work}.

\section{Partition with Interval Target}
\label{sec:interval}

In this section, we analyze the problem \intervalempty{}. 

We assume that there are more items than bins, that is, $m > n$. This assumption is because if $m\leq n$, one can compute all the combinations using brute force (note that the running time is polynomial since $2^m \leq 2^n = O(1)$ since $n$ is a fixed parameter).

We give another definition of the \interval{\asym\decision}{$n,u$}{${\cal X}, {\cal R}$} variant, parameterized by a rational number $v \geq 0$:
\begin{quoting}
\interval{\asym\decision}{$n, v$}{${\cal X}, {\cal R}$}:
\quad
Decide if there exists a partition of ${\cal X}$ among $n$ bins with ratios ${\cal R}$ and relative sums $b_1,\ldots, b_n$ such that $S\leq \max(b_1,\ldots,b_n) \leq (1+v)\cdot S$, where $\sumratios{} := (\sum_i r_i)$ and $S := (\sum_j x_j)/\sumratios{}$.
\end{quoting}
Note that when $u:=vS/M$, both definitions are the same. 

Given an instance of \interval{\asym\decision}{$n,v$}{${\cal X}, {\cal R}$}, 
we say that a partition of ${\cal X}$ is \emph{$v$-feasible} if 
$S\leq \max(b_1,\ldots,b_n) \allowbreak \leq (1+v)\cdot S$, where $b_1,\ldots,b_n$ are the bin relative sums and $S$ is the sum of the items divided by $\sumratios{}$ (the sum of the ratios).
The \interval{\asym\decision}{$n,v$}{${\cal X}, {\cal R}$} problem is to decide whether a $v$-feasible partition exists. 

\begin{definition}
Given an instance of \interval{\asym\decision}{$n,v$}{${\cal X}, {\cal R}$}, a rational number $\epsilon>0$, and a partition of ${\cal X}$ among $n$ bins with ratios ${\cal R}$,
an \emph{almost-full bin} is a bin with relative sum larger than $(1+v)\cdot S /(1+\epsilon)$. 
\end{definition}

A known {\sf FPTAS} for the \partition{\asym\minmax}{$n$}{${\cal X}, {\cal R}$} problem \citep{woeginger2000does} gives us valuable information since we can easily verify that if the output of this {\sf FPTAS} is smaller than $(1+v)\cdot S$ then in any $n$-way partition of ${\cal X}$, at least one bin is almost-full. To gain more information on the instance, we apply an {\sf FPTAS} for a constrained variant of \partitionempty{}, with a {\it Critical Coordinate}.
For an integer $n\geq 2$, and a rational number $v>0$, we define the following problem:
\begin{quoting}
\label{pb:min_b2}
\partition{\asym\minmax}{$n,v,i$}{${\cal X},{\cal R}$}: Minimize $\max(b_1, \dots, b_{i-1}, b_{i+1}, \dots, b_n)$ subject to $b_i \leq (1+v) \cdot S$ where $b_1, \dots, b_n$ are bin relative sums in an $n$-way partition of ${\cal X}$.
\end{quoting}
The general technique developed by \cite{woeginger2000does} for converting a dynamic program to an {\sf FPTAS} can be used to design an {\sf FPTAS} for \partition{\asym\minmax}{$n,v,i$}{${\cal X}, {\cal R}$}; we give the details in 
the supplementary information.
We denote by {\sf FPTAS}(\partition{\asym\minmax}{$n,v,i$}{${\cal X}, {\cal R}$}, $\epsilon$) the largest bin relative sum in the solution obtained by the {\sf FPTAS}.
\begin{lemma}
\label{lem:twoalmostfull}
For any $n\geq 2$, $v>0$, $\epsilon>0$,
if, for all $i\in[n]$, 
{\sf FPTAS}(\partition{\asym\minmax}{$n,v,i$}{${\cal X}, {\cal R}$}, $\epsilon) > (1+v)\cdot S$,
then in any $v$-feasible $n$-way partition of ${\cal X}$, at least \emph{two} bins are almost-full.
\end{lemma}
\begin{proof}
Suppose by contradiction that there exists a $v$-feasible partition of ${\cal X}$ with at most one almost-full bin.
Let $i$ be the index of the bin with the largest sum in that partition.
Since bin $i$ has the largest sum, if there is one almost-full bin, it must be bin $i$. 
Hence, bins that are not $i$ are not almost-full, so $\max(b_1, \dots, b_{i-1}, b_{i+1}, \dots, b_n)\leq (1+v)\cdot S/(1+\epsilon)$.
Moreover, $b_i\leq (1+v)\cdot S$ since the partition is $v$-feasible.
Therefore, {\sf FPTAS}(\partition{\asym\minmax}{$n,v,i$}{${\cal X}, {\cal R}$}, $\epsilon) \leq (1+v)\cdot S$ by the definition of {\sf FPTAS}. This contradicts the lemma assumption.
\end{proof}

\subsection{\interval{\asym\decision}{$n,u$}{${\cal X}, {\cal R}$}: an algorithm for $n = 2$ and $u > 0$}

The case when $n = 2$ and $u \ge 1$ is solved by the cut-the-line algorithm combined with \Cref{thm:from-interval-to-shared}. So, \interval{\asym\decision}{$n=2,u\geq 1$}{${\cal X}, {\cal R}$} can be solved in polynomial time, and it is {\sf NP}-complete if $u=0$. The following algorithm completes the picture (when $0< u < 1$) by giving a polynomial time algorithm for the case where $n=2$, and $u > 0$.

Using \Cref{lem:twoalmostfull}, we derive a complete algorithm for \interval{\asym\decision}{$n=2,v$}{${\cal X}, {\cal R}$}.

\begin{algorithm}[!h] \caption{\qquad\interval{\asym\decision}{$n=2,v$}{${\cal X}, {\cal R}$}} \label{alg:UniInterval}
    \begin{algorithmic}[1]
        \For {$i \in \{1, 2\}$} 
            \State 
            Run the {\sf FPTAS} for \partition{\asym\minmax}{$n,v,i$}{${\cal X}, {\cal R}$} with $\epsilon= v/2$.
            \State 
               Let $b_j$ be the returned value, that is, the relative sum of the bin $j \neq i$.
            \State {\sf If} $b_j \le (1+v)\cdot S$, return ``yes''
        \EndFor
        \State Return ``no''
    \end{algorithmic}
\end{algorithm}	

\begin{theorem}\label{thm:intervalv}
For any rational $v>0$,
\Cref{alg:UniInterval} 
solves the 
\interval{\asym\decision}{$n=2,v$}{${\cal X}, {\cal R}$} problem in time $O(\text{poly}(m,\log{(\sumratios{}S)},1/v))$, where $m$ is the number of items in ${\cal X}$ and $S = (\sum_j x_j)/\sumratios{}$.
\end{theorem}

\begin{proof}
The run-time of Algorithm~\ref{alg:UniInterval} is dominated by the run-time of the {\sf FPTAS} for \partition{\asym\minmax}{$n=2,v,i$}{${\cal X}, {\cal R}$}, which is $O(\text{poly}(m, \log{(\sumratios{}S)}, 1/\epsilon)) = O(\text{poly}(m, \log{(\sumratios{}S)}, 1/v))$ (we show in \Cref{sec:running-time1} that the exact run-time is $O(\frac{m}{v} \log{(\sumratios{}S)})$). 
It remains to prove that \Cref{alg:UniInterval} indeed solves \interval{\asym\decision}{$n=2,v$}{${\cal X}, {\cal R}$} correctly.

If $b_j$, the returned bin relative sum of {\sf FPTAS}(\partition{\asym\minmax}{$n=2,v,i$}{${\cal X}, {\cal R}$}, $\epsilon= v/2$), is at most $(1+v)\cdot S$, then the partition found by the {\sf FPTAS} is  $v$-feasible, so \Cref{alg:UniInterval} answers ``yes'' correctly.
Otherwise, by \Cref{lem:twoalmostfull}, in any $v$-feasible partition of ${\cal X}$ into two bins, both bins are almost-full. 
This means that, in any $v$-feasible partition,
both bin relative sums $b_1$ and $b_2$ are larger than 
$(1+v)\cdot S /(1+\epsilon)$,
which is larger than $S$ since $\epsilon=v/2$.
So $r_1b_1+r_2b_2> (r_1+r_2) S = \sumratios{} S$. But this is impossible since the sum of the items is $\sumratios{}S$ by assumption. Hence, no $v$-feasible partition exists, 
and \Cref{alg:UniInterval} answers ``no'' correctly. \footnote{
Instead of an  {\sf FPTAS} for \partition{\asym\minmax}{$n=2,v,i$}{${\cal X}, {\cal R}$},
we could use an {\sf FPTAS} for the {\sf Subset Sum} problem \cite{kellerer2003efficient}, using the same arguments.
The critical coordinate is not needed in the {\sf Subset Sum FPTAS}, since the output is always smaller than the target.
We prefer to use the {\sf FPTAS} for \partition{\asym\minmax}{$n=2,v,i$}{${\cal X}, {\cal R}$}, since it is based on the general technique of \cite{woeginger2000does}, that we use later for solving other problems.
}
\end{proof}

\begin{corollary}\label{cor:u-instead-of-v}
For any rational $u>0$, 
\Cref{alg:UniInterval} 
solves the 
\interval{\asym\decision}{$n=2,u$}{${\cal X}, {\cal R}$} problem in time $O(\text{poly}(m,\log{(\sumratios{}S)},1/u))$.
\end{corollary}
\begin{proof}
    For any rational $u>0$, let $v := uM/S$, that is $v>0$. \Cref{alg:UniInterval} solves the \interval{\asym\decision}{$n=2,v:=uM/S$}{${\cal X}, {\cal R}$} problem in time $O(\text{poly}(m,\log{(\sumratios{}S)},S/uM)))$, where $m$ is the number of items in ${\cal X}$ and $S = (\sum_i x_i)/\sumratios{}$ is the perfect bin relative sum. Since $S \leq mM$, the algorithm runs in time $O(\text{poly}(m,\log{(\sumratios{}S)},1/u))$ for any $u>0$.
\end{proof}

\begin{remark}
The reader may wonder why we cannot use a similar algorithm for $n\geq 3$.  
For example, we could have considered a variant of \partition{\asym\minmax}{$n,v,i$}{${\cal X}, {\cal R}$} with two critical coordinates:
\begin{quoting}
Minimize $\max(b_3,\dots,b_n)$
subject to $b_1 \leq (1+v)\cdot S$  and $b_2 \leq (1+v)\cdot S$,
where $b_1,b_2,b_3,\dots,b_n$ are bin relative sums in an $n$-way partition of ${\cal X}$.
\end{quoting}
If the {\sf FPTAS} for this problem does not find a $t$-feasible partition, then any $t$-feasible partition must have at least three almost-full bins. 
Since not all bins can be almost-full, one could have concluded that there is no $t$-feasible partition into $n=3$ bins.

Unfortunately, the problem with two critical coordinates probably does not have an {\sf FPTAS} even for $n=3$, even for identical bins,
since it is equivalent to 
the {\sf Multiple Subset Sum} problem, which does not have an {\sf FPTAS} unless {\sf P}={\sf NP} 
\citep{caprara2000multiple}.
In the next subsection, we handle the case $n\geq 3$ differently.
\end{remark}

\subsection{\interval{\asym\decision}{$n,u$}{${\cal X}, {\cal R}$}: an algorithm for $n \geq 3$ and $u \ge n-2$}
\label{sec:dinterval}

The case when $u \ge n-1$ is solved by the cut-the-line algorithm combined with \Cref{thm:from-interval-to-shared}. Here, we prove a more general case where $u \ge n-2$.
We recall that given an instance of \interval{\asym\decision}{$n,u$}{${\cal X}, {\cal R}$},
where the sum of the items is $\sumratios{} \cdot S$ and the largest item is $\sumratios{} \cdot M$, 
where $S, M\in \mathbb{Q}$, we say that a partition of ${\cal X}$ is \emph{$u$-possible} if 
$S \leq \max(b_1,\ldots,b_n) \leq S + u \cdot M$, where $b_1,\ldots,b_n$ are the bin relative sums.
The \interval{\asym\decision}{$n,u$}{${\cal X}, {\cal R}$} problem is to decide whether a $u$-possible partition exists.

Similarly to \Cref{alg:UniInterval},  
we would like to gain information on the instance by running {\sf FPTAS}$($\partition{\asym\minmax}{$n, v, i$}{${\cal X}, {\cal R}$}, $\epsilon)$ for every $i \in [n]$. If the {\sf FPTAS} fails to find a $v$-feasible partition, by \Cref{lem:twoalmostfull}, any $v$-feasible partition must have at least two almost-full machines.
We prove some existential results about partitions with two or more almost-full bins.

\subsubsection{Structure of partitions with two or more almost-full bins}

We distinguish between big, medium, and small items defined as follows. 
A \emph{small-item} is an item with size smaller than $2 \epsilon \sumratios{}S$;
a \emph{big-item} is an item with size greater than $(\frac{v}{n-2} - 2 \epsilon) \sumratios{}S$.
All other items are called \emph{medium items}.
We denote by $m_b$ the number of big-items in the input.
Our main structural Lemma is the following.

\begin{lemma}\label{lem:structure}
Suppose that $u\geq n-2$, $v=uM/S$ (and $v<1$)\footnote{The case where $v \geq 1$ is a highly degenerate one. We can solve it for any $n$ using the cut-the-line algorithm combined with \Cref{thm:from-interval-to-shared} since it gives a solution for $u \geq n-1$, so it is a solution for $v \geq M(n-1)/S \geq M(n-1)/mM$ which is smaller than 1 (since $m>n$).},
$\epsilon = \frac{v}{4 n^2 m^2}$ and the following properties hold.
\begin{adjustwidth}{5mm}{}
	\begin{enumerate}[label=(\arabic*)]
		\item 
		\label{prop:no-1-almost-full} 
		There is no $v$-feasible partition with at most $1$ almost-full bin;
		
		\item  
		\label{prop:no-2-almost-full} 
		There is a $v$-feasible partition with at least $2$ almost-full bins.
\end{enumerate}
	\end{adjustwidth}
	Then, there is a $v$-feasible partition with the following properties.

	\begin{enumerate}[label=(\alph*)]
		\item 
		\label{prop:2-almost-full} 
		Exactly two bins (w.l.o.g. bins 1 and 2) are almost-full.
		
		\item 
		\label{prop:not-almost-full} 
		The sum of every not-almost-full bin $i \in\{3,\ldots,n\}$ satisfies
        
    {\small
    \begin{align*}
            \left(r_i + \left(r_i - \frac{\sumratios{}}{n-2} \right)v - r_i 2\epsilon \right) \cdot S
    		\leq
    		~ r_i b_i ~
    		\leq
            \left(r_i + \left(r_ i - \frac{\sumratios{}}{n-2} \right) v + (\sumratios{} - r_i) 2\epsilon\right)\cdot S.
		\end{align*}
    }
		
		\item 
		\label{prop:big-items} 
		Every item in an almost-full bin is a big-item.

		\item 
		\label{prop:split-big-small-items} 
		Every item in a not-almost-full bin is either a small-item, or a big-item larger or equal to every item in bins 1,2.

    \item \label{prop:litems}
    \label{prop:l+1items}
    Denote by $m_i$ the number of big-items in bin $i$. 
    Then,
    \begin{align*}
    	m_i = 
    	\lceil r_i \cdot (m_b + n - 3) / \sumratios{} + 1/\sumratios{}\rceil && \text{for any almost-full bin $i\in\{1,2\}$;}
    \end{align*}
	\end{enumerate}
\end{lemma}

Note that properties \ref{prop:big-items} and \ref{prop:split-big-small-items} together imply that ${\cal X}$ contains no medium items. In other words, if an instance contains some medium items, then the lemma assumptions cannot hold, that is, either there is a $v$-feasible partition with at most one almost-full bin, or there is no $v$-feasible partition at all.


As an example of this situation, consider an instance with $7$ items, all of which have size $1$, with $n=5$ identical bins and $u=3$. Then, there is a $u$-possible partition with two almost-full bins: $(1,1), (1,1), (1), (1), (1)$, and no $u$-possible partition with 1 or 0 almost-full bins. See \Cref{sec:example-lem} for details. 
A full proof (\Cref{sec:appendix-lem-struct}) of the Lemma appears in the appendix, here we provide a sketch proof.

\begin{proof}[Proof Sketch]
We start with an arbitrary $v$-feasible partition with some $q\geq 2$ almost-full bins $1,\ldots, q$, and convert it using a sequence of transformations to another $v$-feasible partition satisfying properties \ref{prop:2-almost-full}--\ref{prop:litems}, as explained below.
Note that the transformations are not part of our algorithm and are only used to prove the lemma. 
First, we note that there must be at least one bin that is not almost-full, since the sum of an almost-full bin is larger than $S$ whereas the sum of all $n$ bins is $\sumratios{}\cdot S$.

\textbf{For \ref{prop:2-almost-full}}, if there are $q\geq 3$ almost-full bins, we move any item from one of the almost-full bins $3,\ldots,q$ to some not-almost-full bin. We prove that, as long as $q\geq 3$, the target bin remains not-almost-full. This transformation is repeated 
until $q=2$ and only bins 1 and 2 remain almost-full.

\textbf{For \ref{prop:not-almost-full}}, for the lower bound, if there is $i \in \{3,\ldots,n\}$ for which $b_i$ is smaller than the lower bound, we move an item from bins $1,2$ to bin $i$. We prove that bin $i$ remains not-almost-full, so by assumption \ref{prop:no-1-almost-full}, bins 1, 2 must remain almost-full. We repeat until $b_i$ satisfies the lower bound. Once all bins satisfy the lower bound, we prove that the upper bound is satisfied too.

\textbf{For \ref{prop:big-items}}, if bin 1 or 2 contains an item that is not big, we move it to some bin $i \in \{3,\ldots,n\}$. 
We prove that bin $i$ remains not-almost-full, so by  assumption \ref{prop:no-1-almost-full}, bins 1, 2 must remain almost-full.
We repeat until bins 1 and 2 contain only big-items.

\textbf{For \ref{prop:split-big-small-items}}, if some bin $i\in\{3,\ldots,n\}$ contains an item $x_j$ that is bigger than $2\sumratios{}S\epsilon$ and smaller than any item $x_{j'}$ in bin 1 or bin 2, we exchange $x_j$ and $x_{j'}$. We prove that, after the exchange, bin $i$ remains not-almost-full, so bins 1, 2 must remain almost-full. We repeat until bins 1,2 contain only the smallest big-items.
Note that transformations \ref{prop:not-almost-full}, \ref{prop:big-items}, \ref{prop:split-big-small-items} increase the sum in the not-almost-full bins $3,\ldots,n$, so the process must end.

\textbf{For \ref{prop:litems}}, we show an upper bound on the almost-full bin sums depending on the number of big-items. We divide the upper bound by the largest item, to get a lower bound on the number of big-items in each bin. We show that the difference between the upper bound and the lower bound for the almost-full bins is less than one, therefore there is at most one possible value.
\end{proof}

\subsubsection{Algorithm sketch} \label{sub:bins12}

The algorithm starts by running 
{\sf FPTAS}$($\partition{\asym\minmax}{$n, v, i$}{${\cal X}, {\cal R}$}, $\epsilon)$.
If the {\sf FPTAS} find a $v$-feasible partition, we return ``yes''.
Otherwise, by \Cref{lem:twoalmostfull}, any $v$-feasible partition must have at least two almost-full bins. So, if a $v$-feasible partition exists, then there exists a $v$-feasible partition satisfying all properties of \Cref{lem:structure}. We can find such a partition (if it exists) in the following way. We loop over all pairs of bins. For simplicity, let us call the current pair of bins: 1 and 2. For each pair, we check whether there exists an allocation in which these bins are the almost-full bins in two steps:

\begin{itemize}
\item \textbf{For bins $1,2$:} 
Let  $B \subseteq {\cal X}$ contain all big-items in ${\cal X}$, $m_b := |B|$, and $m_1, m_2$ given by \Cref{lem:structure}\ref{prop:litems}. Find a $v$-feasible partition of the $m_1 + m_2$ smallest items in $B$ into two bins with $m_i$ items in each bin $i \in \{1,2\}$.
\item \textbf{For bins $3,\ldots,n$:}
Find a $v$-feasible partition of the remaining items in ${\cal X}$ into $n-2$ bins. 
\end{itemize}
For bins $3,\ldots,n$, we use the {\sf FPTAS} for the problem \partition{\asym\minmax}{$n=n-2$}{${\cal X}, {\cal R'}$}, where ${\cal R'}=(r_3,\ldots,r_n)$.
If it returns a $v$-feasible partition, we are done. 
Otherwise, by {\sf FPTAS} definition, every partition into $(n-2)$ bins must have at least one almost-full bin. But by \Cref{lem:structure}\ref{prop:2-almost-full}, all bins $3,\ldots,n$ are not almost-full which is a contradiction. Therefore, if the {\sf FPTAS} does not find a $v$-feasible partition, we answer ``no''. 
Bins 1 and 2 require a more complicated algorithm\footnote{Using a simple {\sf FPTAS} for {\sf Subset Sum} to solve our problem is not possible since the sum of the items may equal $(r_1 + r_2)(S+uM)$ and our goal is to partition the items among two bins of capacity $(r_1 + r_2)(S+uM)$, which means that no approximation is allowed. Therefore, we need to pre-process the items in some way. The technique we use is to inverse the item sizes.
Even after inverting the items, we need an additional cardinality constraint enforcing that there are exactly $m_i$ items in bin $i$.}
that is explained in the following section.

\subsubsection{Algorithm for bins 1 and 2}\label{sec:appendix-algo-12}\label{app:big-items-alg}

We use the following notation:
\begin{itemize}
\item [--] $B_{1:2}$ is the set of $m_1 + m_2$ smallest items in $B$ 
(where $B$ is the set of big-items in ${\cal X}$).
\item [--]
$S_{1:2}:= (\sum_{j\in B_{1:2}} x_j)/(r_1+r_2)$, so that the sum of items in $B_{1:2}$ is $(r_1 + r_2) S_{1:2}$.
\item [--]
$M_{1:2}:= (\max_{j\in B_{1:2}} x_j)/\sumratios{}$, so that the largest item in $B_{1:2}$ is $\sumratios{} M_{1:2}$.
\end{itemize}

Construct a new set, $\overline{B_{1:2}}$, by replacing each item $x \in B_{1:2}$ by its ``inverse'', defined by $\overline{x} := \sumratios{} M_{1:2} - x$.
Since
all items in $B_{1:2}$ are big-items,
every inverse are between $0$ and $2 \sumratios{} S \epsilon$ since $\sumratios{}M - \left(\frac{v}{n-2} - 2 \epsilon\right) \sumratios{}S \leq 2 \sumratios{} S \epsilon$ and $\frac{v\sumratios{}S}{n-2} \geq \sumratios{}M \geq \sumratios{} M_{1:2}$.
Let $\overline{S_{1:2}}$ be the sum of inverses divided by $r_1+r_2$. 

Given a $v$-feasible partition of $B_{1:2}$ with sums $b_1, b_2$, with $m_1$ items in bin $1$ and $m_2$ items in bin $2$, denote
the relative sums of the corresponding partition of $\overline{B_{1:2}}$ by $\overline{b_1}$ and  $\overline{b_2}$, respectively. Since both bins $1$ and $2$ contain $m_1, m_2$ items,
\begin{align}
	r_1\overline{b_1} 
	&
	= m_1 \cdot \sumratios{}M_{1:2} - r_1 b_1 
	\nonumber 
	\text{ and, }
	r_2\overline{b_2} 
	= m_2 \cdot \sumratios{}M_{1:2} - r_2 b_2 \nonumber \\
	(r_1 + r_2) \overline{S_{1:2}} 
	& 
	=  r_1\overline{b_1}  +  r_2\overline{b_2} = (m_1 + m_2) \cdot \sumratios{}M_{1:2} - (r_1 + r_2) S_{1:2}  \nonumber \\
	\overline{S_{1:2}} 
	&
	=  \frac{m_1 + m_2}{r_1 + r_2} \cdot \sumratios{}M_{1:2} - S_{1:2} 
	\nonumber 
	\leq 
	\left(\frac{m_b+n-3}{\sumratios{}} + \frac{2(\sumratios{}-n+1)}{(r_1+r_2)\sumratios{}}\right)\cdot \sumratios{}M_{1:2} - S_{1:2} 
	&& 
	\nonumber 
    \\
    & 
    \nonumber 
	\leq \left(\frac{m_b+n-3}{\sumratios{}}+ \frac{2}{(r_1+r_2)\sumratios{}}\right)\cdot \sumratios{}M_{1:2} - S_{1:2} 
	\\
	& 
	= 
	\left(m_b+n-3 + \frac{2}{r_1+r_2}\right)\cdot M_{1:2} - S_{1:2}.
	~~~~~~
	\text{(by prop \ref{prop:litems})} 
	\label{eq:s12}
\end{align}
Now,
\begin{align*}
	b_1 \leq (1+v)S 
	\iff 
	&
	r_1 \overline{b_1} \geq  m_1 \sumratios{}M_{1:2} - r_1 (1+v)S
	\\
	\iff 
	& 
	r_2 \overline{b_2} \leq  (r_1 + r_2) \overline{S_{1:2}} - m_1 \sumratios{}M_{1:2} + r_1 (1+v)S 
	\iff
	\\
	& 
	\overline{b_2}  \leq  \overline{S_{1:2}} + \frac{r_1 \overline{S_{1:2}} - m_1 \sumratios{} M_{1:2}}{r_2}  + \frac{r_1}{r_2}(1+v)S
	\\
	& 
	~~~ =  \overline{S_{1:2}} +  \frac{1}{r_2} \cdot (r_1 \frac{m_1 + m_2}{r_1 + r_2} \cdot \sumratios{} M_{1:2} - r_1 S_{1:2} - m_1 \sumratios{} M_{1:2} + r_1 (1+v)S)
	\\
	& 
	~~~ = \overline{S_{1:2}} + \frac{1}{r_2} \cdot  ((r_1 \frac{m_1 + m_2}{r_1 + r_2} - m_1 ) \sumratios{} M_{1:2} + r_1 (S+vS-S_{1:2}))
	\\
	& 
	~~~ = \overline{S_{1:2}} + \frac{1}{r_2} \cdot  (\frac{r_1 m_1 + r_1 m_2 - r_1 m_1 - r_2 m_1}{r_1 + r_2} \sumratios{} M_{1:2} + r_1 (S+vS-S_{1:2}))
	\\
	& 
	~~~ = \overline{S_{1:2}} + \frac{1}{r_2} \cdot  (\frac{r_1 m_2 - r_2 m_1}{r_1 + r_2} \sumratios{} M_{1:2} + r_1 (S+vS-S_{1:2}))
	\\
	& 
	~~~ = \overline{S_{1:2}} \cdot (1+ \frac{1}{r_2 \overline{S_{1:2}}} \cdot  (\frac{r_1 m_2 - r_2 m_1}{r_1 + r_2} \sumratios{} M_{1:2} + r_1 (S+vS-S_{1:2}))),
\end{align*}
and similarly, 
$b_2 \leq  (1+v) S$ 
holds iff
$\overline{b_1} \leq  \overline{S_{1:2}} 
\cdot (1 + \frac{1}{r_1 \overline{S_{1:2}}} \cdot  (\frac{r_2 m_1 - r_1 m_2}{r_2 + r_1} \sumratios{} M_{1:2} + r_2 (S+vS-S_{1:2}))$.

So the problem of finding a $v$-feasible partition of $B_{1:2}$ with $m_1$ items in bin $1$ and $m_2$ items in bin $2$ is equivalent to the following problem. For two fixed rationals $v_1, v_2$ and a set of ratios ${\cal R}=(r_1,r_2)$ 
such that $r_1v_1+r_2v_2 >0$,
\begin{quote}
	\interval{\asym\decision}{$n=2, v_1, v_2$}{${\cal X}, {\cal R}, m_1, m_2$}: \quad
	Decide if there exists a partition of ${\cal X}$ among two bins in which the relative sum of every bin $i$ is at most $(1+v_i) S$, where every bin $i$ contains exactly $m_i$ items.
\end{quote}
In \Cref{sec:vv-variant}, we design an algorithm similar to \Cref{alg:UniInterval}, that solves \interval{\asym\decision}{$n=2, \overline{v_1}, \overline{v_2}$}{${\cal X}, {\cal R}, m_1, m_2$} in $O(\text{poly}(m, (r_1 + r_2)/(r_1\overline{v_1}+r_2\overline{v_2}), \log{(\sumratios{}S)}))$ time.

We now show that $(r_1 + r_2)/(r_1\overline{v_1}+r_2\overline{v_2})$ is polynomial in the problem size.
We have, 
\begin{align*}
	\overline{v_1} =  \frac{1}{r_1 \overline{S_{1:2}}}\cdot \left[ \frac{r_2 m_1 - r_1 m_2}{r_2 + r_1} \sumratios{}M_{1:2} + r_2 (S+vS-S_{1:2}) \right],
	\\
	\overline{v_2} = \frac{1}{r_2 \overline{S_{1:2}}}\cdot \left[ \frac{r_1 m_2 - r_2 m_1}{r_1 + r_2} \sumratios{}M_{1:2} + r_1 (S+vS-S_{1:2}) \right].
\end{align*}
By adding the above two equations, we get
\begin{align*}
	\overline{S_{1:2}} \cdot (r_1\overline{v_1}+r_2\overline{v_2}) & := \frac{r_1 m_2 - r_2 m_1}{r_1 + r_2} \sumratios{}M_{1:2} + r_1 (S+vS-S_{1:2}) \\
	& + \frac{r_2 m_1 - r_1 m_2}{r_2 + r_1} \sumratios{}M_{1:2} + r_2 (S+vS-S_{1:2}) \\
	& = (r_1 + r_2)(S+vS-S_{1:2}), 
\end{align*}
so 
$(r_1 + r_2)/(r_1\overline{v_1}+r_2\overline{v_2})=
\overline{S_{1:2}}/(S+vS-S_{1:2})$,
so the run-time is polynomial in:
$
V := \frac{
	\overline{S_{1:2}}
}{
	S+vS-S_{1:2}
}
$

The numerator of $V$ is upper-bounded by \eqref{eq:s12}.
We now give a lower bound to the denominator.

By \Cref{lem:structure}\ref{prop:split-big-small-items}, in the not-almost-full bins $3, \dots, n$, there are $m_3, \dots, m_n$ big-items that are at least as large as $\sumratios{} M_{1:2}$ (in addition to some small-items). Therefore, the sum $\sumratios{} S$ of all items satisfies
$\sumratios{} S \geq (r_1 + r_2) S_{1:2} + \sum_{i=3}^n m_i \cdot \sumratios{}M_{1:2}$, which implies that 
\begin{align*}
	S - S_{1:2} & \geq \frac{(r_1 + r_2) S_{1:2}}{\sumratios{}} + \sum_{i=3}^n m_i M_{1:2} - S_{1:2} 
	\\
	& 
	= 
	\frac{(r_1 + r_2) S_{1:2} - \sumratios{}  S_{1:2}}{\sumratios{}} + \sum_{i=3}^n m_i M_{1:2} 
	\\
	& 
	= 
	\frac{(r_1 + r_2 - \sumratios{}) S_{1:2}}{\sumratios{}} + \sum_{i=3}^n m_i M_{1:2} 
	\\
	&
	= - 
	\frac{\sum_{i=3}^n r_i S_{1:2}}{\sumratios{}} + \sum_{i=3}^n m_i M_{1:2} 
	\\
	& 
	=  
	\frac{\sum_{i=3}^n m_i}{\sumratios{}}\sumratios{} M_{1:2} - \frac{\sum_{i=3}^n r_i}{\sumratios{}} S_{1:2}
\end{align*}
Therefore,
\begin{align*}
	S-S_{1:2}+vS & \geq \frac{\sum_{i=3}^n m_i}{\sumratios{}}\sumratios{} M_{1:2} - \frac{\sum_{i=3}^n r_i}{\sumratios{}} S_{1:2} + vS \\
	& \geq \frac{\sum_{i=3}^n m_i + (n-2)}{\sumratios{}}\sumratios{} M_{1:2} - \frac{\sum_{i=3}^n r_i}{\sumratios{}} S_{1:2} \\
	& ~~~~~~~~~~~~~~~~~~~~~~~~~~~~~~~~~~~ \text{(since $u\geq n-2$, $v S = u M  \geq (n-2) M_{1:2}$)} \\
	& = \frac{\sum_{i=3}^n (m_i + 1)}{\sumratios{}}\sumratios{} M_{1:2} - \frac{\sum_{i=3}^n r_i}{\sumratios{}} S_{1:2}
	\\
	& 
	\geq \frac{\sum_{i=3}^n (\frac{r_i}{\sumratios{}}(m_b+n-3) + 1/\sumratios{})}{\sumratios{}}\sumratios{} M_{1:2} - \frac{\sum_{i=3}^n r_i}{\sumratios{}} S_{1:2}
	~ \text{(by property \ref{prop:litems})}
	\\
	& = \frac{(m_b+n-3)\cdot \sum_{i=3}^n r_i + n-2}{\sumratios{}} M_{1:2} - \frac{\sum_{i=3}^n r_i}{\sumratios{}} S_{1:2}
    \\
	& \geq \frac{(m_b+n-3)\cdot \sum_{i=3}^n r_i + 1}{\sumratios{}} M_{1:2} - \frac{\sum_{i=3}^n r_i}{\sumratios{}} S_{1:2}
	\\
	&
	= 
	\frac{\sum_{i=3}^n r_i}{\sumratios{}}\cdot \left(
	((m_b+n-3) + \sumratios{}/(r_3 + \dots + r_n)) M_{1:2} - S_{1:2}
	\right).
\end{align*}
Note that $\sumratios{}/(r_3 + \dots + r_n) > 1$ while $2/(r_1+r_2) \leq 1$, therefore, $\sumratios{}/(r_3 + \dots + r_n) > 2/(r_1+r_2)$, so:
\begin{align*}
	S-S_{1:2}+vS & > \frac{\sum_{i=3}^n r_i}{\sumratios{}}\cdot \left(
	(m_b+n-3 + 2/(r_1+r_2)) M_{1:2} - S_{1:2}
	\right).
\end{align*}
Now, we substitute the numerator and the lower bound for the denominator:
\begin{align*}
	V \leq &
	\frac{
		\left(m_b+n-3 + \frac{2}{r_1+r_2}\right)\cdot M_{1:2} - S_{1:2}
	}{
		\frac{\sum_{i=3}^n r_i}{\sumratios{}}\cdot\left(\left(m_b+n-3 + \frac{2}{r_1+r_2}\right)\cdot M_{1:2} - S_{1:2}\right)
	}   =   \frac{\sumratios{}}{\sum_{i=3}^n r_i}.
\end{align*}
By assumption, bins 1 and 2 are almost-full, so we have: 
\begin{align*}
    (1+v-2 \epsilon)(r_1+r_2)S & < \sumratios{}S \\
    \iff 
    (r_1+r_2) & < \sumratios{} / ((1+v-2 \epsilon)).
\end{align*}
Substituting in the above upper bound for $V$ gives:
\begin{align*}
V &\leq \sumratios{} / (\sumratios{} - (r_1 + r_2))
\\
&<
\sumratios{} / (\sumratios{} - \sumratios{} / ((1+v-2 \epsilon)))
\\
&=
(1+v-2 \epsilon) / ((1+v-2 \epsilon) - 1)
\\
&=
1 + 1 / (v-2 \epsilon)
\\
&<
1 + 2 / v
\\
&\leq 
1 + 2 S (n-2) / M
\\
& \in 
O(m).
\end{align*}
So $V \in O(m)$ and the sub-problem can be decided in time $O(\text{poly}(m, \allowbreak \log{(\sumratios{}S)}))$.



\subsubsection{Complete algorithm}
We are now ready to present the complete algorithm for \interval{\asym\decision}{$n,u$}{${\cal X}, {\cal R}$}, presented in Algorithm~\ref{alg:tfeasible}.

\begin{algorithm}
\caption{\interval{\asym\decision}{$n,u$}{${\cal X}, {\cal R}$} (complete algorithm)
}{
}\label{alg:tfeasible}
\begin{algorithmic}[1]
\State
$v \longleftarrow u M / S$ and
$\epsilon \longleftarrow v/(4n^2m^2)$.

\For {$i \in \{1, \dots, n\}$} 

    \State \label{item:fptas1}
    {\sf If} {\sf FPTAS}(\partition{\asym\minmax}{$n,v,i$}{${\cal X}, {\cal R}$}, $\epsilon) \leq (1+v)\cdot S$, {\sf return} ``yes''. 

\EndFor   

\State $B \longleftarrow $ 
$\big\{x_i \in {\cal X} \mid x_i
> \sumratios{} S (\frac{v}{n-2} - 2 \epsilon)\big\}$
\Comment{{\scriptsize big-items}}

\State $m_b \longleftarrow |B|$

\For {every pair $i,j$ of bins}
\Comment{{\scriptsize $i$ and $j$ are the almost-full bins}}

\State $m_1 \longleftarrow \lceil r_i \cdot (m_b+n-3) / \sumratios{} + 1/\sumratios{}\rceil$, $m_2 \longleftarrow r_j \cdot (m_b+n-3) / \sumratios{} + 1/\sumratios{}\rceil$.
\Comment{{\scriptsize by property \ref{prop:l+1items}}}

\State {\sf If} $m_1$ or $m_2$ is greater than $r_i \cdot (m_b+n-3) / \sumratios{} + 1/\sumratios{} + (\sumratios{}-n)/\sumratios{}$ or $r_j \cdot (m_b+n-3) / \sumratios{} + 1/\sumratios{} + (\sumratios{}-n)/\sumratios{}$ respectively, skip to next iteration. 
        
\State $B_{1:2} \longleftarrow $ the $m_1 + m_2$ smallest items in $B$.
\Comment{{\scriptsize break ties arbitrarily}}

\State $B_{3..n} \longleftarrow {\cal X} \setminus B_{1:2}$.
\Comment{{\scriptsize 
all the remaining
big-items
and $m - m_b$ small-items}}

\State\label{item:b3n}$(b_3,\ldots,b_n) \longleftarrow$ {\sf FPTAS}
(\partition{\asym\minmax}{$n-2$}{$B_{3..n}, {\cal R}_{3..n}$)}, $\epsilon$)
\Comment{{\scriptsize an $(n-2)-$way partition of $B_{3..n}$}}

\State {\sf If} $\max(b_3,\ldots,b_n) > (1+v)S$, skip to next iteration.

\State Look for a feasible partition for bins $i,j$ as explained in \Cref{app:big-items-alg}.

\State {\sf If} a feasible partition is found, {\sf return} ``yes''.
        
\EndFor
\State {\sf Return} ``no''.
\end{algorithmic}
\end{algorithm}

\begin{theorem}
\label{thm:n>=3}
For any fixed integer $n \geq 3$
and rational number $u\ge n-2$,
\Cref{alg:tfeasible} solves
\interval{\asym\decision}{$n,u$}{${\cal X}, {\cal R}$}
in 
$O(\text{poly}(m, \log{(\sumratios{} S)}))$ time,
where $m$ is the number of items in ${\cal X}$, and $\sumratios{} S$ is the sum of the items.
\end{theorem}

\begin{proof}
If Algorithm~\ref{alg:tfeasible} answers ``yes'', then clearly a $v$-feasible partition exists.
To complete the correctness proof, we have to show that the opposite is true as well.

Suppose there exists a $v$-feasible partition.
If the partition has at most one almost-full bin (say bin $i$), 
then by \Cref{lem:twoalmostfull}, it is found by 
the {\sf FPTAS} in step \ref{item:fptas1} in iteration $i$ of the for loop.
Otherwise, the partition must have at least two almost-full bins, and there exists a $v$-feasible partition satisfying the properties of \Cref{lem:structure}. 

The for loop checks all possible pairs of bins, and checks whether these two bins can be the two almost-full bins in a $v$-feasible partition.
By \ref{prop:l+1items}, the number of big-items in bins 1 and 2 is equal to $r_1 \cdot (m_b+n-2) / \sumratios{}$ and $r_2 \cdot (m_b+n-2) / \sumratios{}$ and it must be an integer. 
By properties \ref{prop:2-almost-full} and \ref{prop:not-almost-full}, there exists a partition of $B_{3..n}$ into $n-2$ bins $3,\ldots,n$ which are not almost-full.
By definition, the {\sf FPTAS} in step \ref{item:b3n} finds a partition with 
	$\max(b_3,\ldots,b_n) \leq (1+v)S$.
The final steps, regarding the partition of $B_{1:2}$, are justified by the discussion at \Cref{app:big-items-alg}.
The complete running time $O(\text{poly}(m, \log{(\sumratios{}S)}))$  of \Cref{alg:tfeasible} is justified by the running time of the {\sf FPTAS} for \partition{\asym\minmax}{$n$}{${\cal X}, {\cal R}$}, \partition{\asym\minmax}{$n,v, i$}{${\cal X}, {\cal R}$} and for \partition{\asym\minmax}{$n=2,v_1,v_2,i$}{${\cal X}, {\cal R}$}. 
Note that $\sumratios{}S/\sumratios{}M \leq m$ since it is the sum of all items divided by the largest item. Therefore, $1/v = S/uM = \sumratios{}S/\sumratios{}M \cdot 1/u \leq m/u = O(m)$ since $u$ is fixed. Also, note that $1/\epsilon$ depends on $n$ (which is fixed) $1/v$ and $m$ so,  $1/\epsilon = O(m)$.
The exact running time, $O(m^4\log{(\sumratios{}S)})$, is detailed in \Cref{sec:running-time1.1}.
\end{proof}

\subsection{Hardness for $n \geq 3$ bins and $u<n-2$}\label{sec:hard_general_m}

The following theorem complements the previous subsection. 
\begin{theorem}
\label{thm:np-complete-d-possible}
Given a fixed integer $n \geq 3$ and a positive rational number $u < n-2$,
the problem	\interval{\decision}{$n,u$}{${\cal X}$}
is {\sf NP}-complete.
\end{theorem}
\begin{proof}
Given an $n$-way partition
of $m$ items, summing the sizes of all elements in each bin allows us to check whether the partition is $u$-possible in linear time. So, the problem is in {\sf NP}. To prove that \interval{\decision}{$n,u$}{${\cal X}$} is {\sf NP}-Hard, we reduce from the 
equal-cardinality partition
problem, proved to be {\sf NP}-hard in \cite{gareycomputers}: given a list with an even number of integers, decide if they can be partitioned into two subsets with the same sum and the same cardinality. 

Given an instance ${\cal X}_1$ of equal-cardinality partition, denote the number of items in ${\cal X}_1$ by $2 m'$.
Define $M$ to be the sum of numbers in ${\cal X}_1$ divided by $2n(1-\frac{u}{n-2})$, so that the sum of items in ${\cal X}_1$ is $2n(1-\frac{u}{n-2}) M$
(where $n$ and $u$ are the parameters in the theorem statement).
We can assume w.l.o.g. that all items in ${\cal X}_1$ are at most $n(1-\frac{u}{n-2}) M$, since if some item is larger than half of the sum, the answer is necessarily ``no''.

Construct an instance ${\cal X}_2$ of the equal-cardinality partition problem by replacing each item $x$ in ${\cal X}_1$ by $nM-x$. 
So ${\cal X}_2$ contains $2 m'$ items between $n(\frac{u}{n-2}) M$ and $n M$. Their sum, which we denote by $2 S'$, satisfies
$
2 S' = 2 m' \cdot n M - 2n\left(1-\frac{u}{n-2}\right) M = 2n \left(m' - 1 + \frac{u}{n-2}\right) M.
$
Clearly, ${\cal X}_1$ has an equal-sum equal-cardinality partition (with bin sums $n\left(1-\frac{u}{n-2}\right) M$) iff ${\cal X}_2$ has an equal-sum equal-cardinality partition (with bin sums $S' = n \left(m' - 1 + \frac{u}{n-2}\right) M$).

Construct an instance $({\cal X}_3, u)$ of
\interval{\decision}{$n,u$}{${\cal X}$}
by adding $(n-2)(m'-1)$ items of size $n M$.
Note that $n M$ is indeed the largest item size in ${\cal X}_3$. 
Denote the sum of item sizes in ${\cal X}_3$ by $n S$. Then
\begin{align*}
n S &=  2S' + (n-2)(m'-1) \cdot n M
=  n \left(2 (m' - 1) + \frac{2u}{n-2} + (n-2)(m'-1) \right) \cdot M 
\\ &
=  n \left(n(m' - 1) + \frac{2u}{n-2} \right) M;
\end{align*}
$$S + u M = \left(n(m' - 1) + \frac{2u}{n-2} + u \right) M
= \left(n(m' - 1) + \frac{n u}{n-2} \right) M
= S',$$
so a partition of ${\cal X}_3$ is $u$-possible if and only if the sum of each of the $n$ bins in the partition is at most $S+uM = S'$.

We now prove that if ${\cal X}_2$ has an equal-sum equal-cardinality partition, then the instance (${\cal X}_3, u$) has a $u$-possible partition, and vice versa. If ${\cal X}_2$ has an equal-sum partition, then the items of ${\cal X}_2$ can be partitioned into two bins of sum $S'$, and the additional $(n-2)(m'-1)$ items can be divided into $n-2$ bins of $m'-1$ items each. The sum of these items is
\begin{align}
\label{eq:(m-1)nM-hard}
(m' -1)\cdot nM =  n(m'-1)M 
= S - \frac{2}{n-2}uM < S+u M = S',
\end{align}
so the resulting partition is a $u$-possible partition of ${\cal X}_3$.
Conversely, suppose ${\cal X}_3$ has a $u$-possible partition. Let us analyze its structure.

Since the partition is $u$-possible, the sum of every two bins is at most $2(S + uM)$.
So the sum of every $n-2$ bins is at least $nS - 2(S + uM) = (n-2)S - 2uM$. 
Since the largest $(n-2)(m'-1)$ items in ${\cal X}_3$ sum up to exactly $(n-2)S - 2uM$ by \eqref{eq:(m-1)nM-hard}, every $n-2$ bins must contain at least $(n-2)(m'-1)$ items.
Since ${\cal X}_3$ has $(n-2)(m'-1)+2m'$ items overall, $n-2$ bins must contain exactly $(n-2)(m'-1)$ items, such that each item size must be $n M$, and their sum must be $(n-2)S - 2uM$.
The other two bins contain together $2m'$ items with a sum of $2(S+uM)$, so each of these bins must have a sum of exactly $S+uM$. Since $(m' -1)\cdot nM < S + uM$ by \eqref{eq:(m-1)nM-hard}, each of these two bins must contain exactly $m'$ items.
These latter two bins are an equal-sum equal-cardinality partition for ${\cal X}_2$. This construction is done in polynomial time, completing the reduction.
\end{proof}


\section{Partition with Split Items} \label{sub:equivalence}
We now deal with the problem \splititemempty{}. We redefine the \splititem{\asym\decision}{$n,s$}{${\cal X}, {\cal R}$} problem.
For a fixed number $n \geq 2$ of bins, given a list ${\cal X}$, some ratios ${\cal R}$, the number of split items $s\in\{0,\ldots, m\}$ and a rational number $v\geq 0$, define:
\begin{quoting}
\splititem{\asym\decision}{$n,s,v$}{${\cal X}, {\cal R}$}: \quad
Decide if there exists a partition of ${\cal X}$ among $n$ bins with ratios ${\cal R}$ and relative sums $b_1,\ldots,b_n$ with at most $s$ split items, such that $\max(b_1,\ldots,b_n)\leq(1+v)S$, where $S:= (\sum_j x_j)/\sumratios{}$.
\end{quoting}
The special case $v=0$ corresponds to the \splititem{\asym\decision}{$n,s$}{${\cal X}, {\cal R}$} problem.
The following Lemma shows that, w.l.o.g., we can consider only the longest items for splitting.

\begin{lemma}\label{largestObjects}
For every partition with $s\in \mathbb{N}$ split items and bin relative sums $b_1,\ldots,b_n$, there exists a partition with the same bin relative sums $b_1,\ldots,b_n$ in which only the $s$ \emph{largest} items are split. 
\end{lemma}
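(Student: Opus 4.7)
My plan is to prove the lemma via a repeated swap argument. Fix once and for all a set $T$ of $s$ jobs with largest lengths (ties broken arbitrarily). Starting from the given partition $P$, I will repeatedly exchange the roles of a currently-split job outside $T$ with an unsplit job inside $T$, in a way that preserves the load on every machine and hence all completion times. After at most $s$ such exchanges, every split job will lie in $T$, i.e.\ only the $s$ longest jobs are split.

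The key step is the definition of a single swap. Suppose job $m$ is split with piece $\alpha_i$ placed on machine $i$ (so $\sum_i \alpha_i = x_m$), while job $L$ sits entirely on some machine $k$ with $x_L \geq x_m$. I will put all of $m$ on machine $k$; place a piece of $L$ of size $\alpha_i$ on each machine $i \neq k$; and leave a residual piece of $L$ of size $\alpha_k + (x_L - x_m)$ on machine $k$. For $i \neq k$, a mass $\alpha_i$ of $m$ is simply replaced by mass $\alpha_i$ of $L$, so machine $i$'s load is unchanged. On machine $k$, the removed mass $\alpha_k + x_L$ equals the added mass $x_m + \alpha_k + (x_L - x_m)$, so its load is unchanged too. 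Since $m$ was split on at least two machines, some $\alpha_i$ with $i \neq k$ is positive, so $L$ now has positive pieces on at least two machines, while $m$ has been consolidated on $k$. The total number of split jobs is therefore preserved.

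To iterate, I would use the potential $\phi(P) := |\{\text{jobs split in } P\} \setminus T|$. If $\phi(P) = 0$ there is nothing more to do. Otherwise, because at most $s$ jobs are split and $|T| = s$, at least one split job $m$ lies outside $T$ and at least one job $L \in T$ is unsplit. By the definition of $T$, $x_L \geq x_m$, so the swap above applies and yields a new partition $P'$ with the same completion times and $\phi(P') = \phi(P) - 1$. After at most $s$ iterations $\phi$ reaches $0$, giving a partition with the desired property.

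The main obstacle I anticipate is just the bookkeeping in the swap, specifically verifying that the load on machine $k$ is preserved when $\alpha_k > 0$ (i.e.\ when job $m$ already had a piece on the machine currently holding $L$) and checking that $L$ genuinely ends up split rather than concentrating all of its mass on machine $k$. Both points are immediate from the formulas above. No feature of ${\cal R}$, of $n$, or of any makespan target enters the argument; the proof is purely combinatorial and works verbatim for fractional piece sizes.
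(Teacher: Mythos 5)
Your proof is correct and takes essentially the same approach as the paper's: repeatedly swap a split job $m$ with a longer unsplit job $L$ via a mass-preserving exchange, so every machine's load (and hence every completion time) is unchanged. The paper phrases the swap as ``move $x$ to the machine holding $y$, remove a fraction $x/y$ of $y$, and redistribute that fraction in the same proportions as $x$ was split,'' and then just says ``repeat.'' You make two things more explicit: (i) the bookkeeping on the machine $k$ holding $L$, including the case where $m$ already had a piece $\alpha_k$ there (the residual $\alpha_k + (x_L - x_m)$), which the paper's wording glosses over; and (ii) a clean termination argument via the potential $\phi(P) = |\{\text{split jobs}\}\setminus T|$, together with the counting observation that $\phi(P)>0$ forces some $L\in T$ to be unsplit. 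These are refinements of, not departures from, the paper's argument. One very minor point: the justification ``some $\alpha_i$ with $i\neq k$ is positive, so $L$ has positive pieces on at least two machines'' implicitly assumes $L$'s residual on $k$ is positive; if $\alpha_k=0$ and $x_L=x_m$ it is zero, but then $m$ had two positive pieces away from $k$, so $L$ still ends up split. In any case this observation is not needed for the potential argument, since $L\in T$ regardless of whether it ends up split.
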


\begin{proof}
Consider a partition in which some item with size $x$ is split between two or more bins, whereas some item with size $y>x$ is allocated entirely to some bin $i$. Construct a new partition as follows: first move item $x$ to bin $i$; second remove from bin $i$, a fraction $\frac{x}{y}$ of item $y$; and finally split that fraction of item $y$ among the other bins, in the same proportions as the previous split of item $x$.
All bin relative sums remain the same.
Repeat the argument until only the longest items are split.
\end{proof}

\begin{theorem}
\label{thm:from-interval-to-shared}
For any fixed integers $n\geq 2$ and $u\geq 0$, there is a polynomial-time reduction from \interval{\decision}{$n,u$}{${\cal X}$} to \splititem{\decision}{$n,s=u,v=0$}{${\cal X}$}.
\end{theorem}

\begin{proof}
Given an instance ${\cal X}$ of 
\interval{\decision}{$n,u$}{${\cal X}$}, we add $u$ items of size $nM$, where $nM$ is the size of the biggest item in ${\cal X}$ to construct an instance
${\cal X}'$ of \splititem{\decision}{$n,s=u,v=0$}{${\cal X}'$}.
	
First, assume that ${\cal X}$ has a $u$-possible partition. Then there are $n$ bins with a sum at most $S+u M$. Take the $u$ added items of size $n M$ and add them to the bins, possibly splitting some items between bins, such that the sum of each bin becomes exactly $S + u M$. This is possible because the sum of the items in ${\cal X}'$ is $nS + unM=n(S+uM)$. The result is a $0$-feasible partition of ${\cal X}'$ with at most $u$ split items.
	
Second, assume that ${\cal X}'$ has a $0$-feasible partition with at most $u$ split items.
Then there are $n$ bins with a sum of exactly $S+u M$. By \Cref{largestObjects}, we can assume the split items are the largest ones, which are the $u$ added items of size $nM$. Remove these items to get a partition of ${\cal X}$. The sum in each bin is now at most $S+uM$, so the partition is $u$-possible. This construction is done in polynomial time, which completes the proof.
\end{proof}

\begin{corollary}\label{cor:splithard}
For every fixed integers $n\geq 3$ and $s \in \{0,\dots,n-3\}$, the problem \splititem{\decision}{$n,s$}{${\cal X}$} is {\sf NP}-complete.
\end{corollary}

\begin{proof}
\Cref{thm:np-complete-d-possible} and
\Cref{thm:from-interval-to-shared} 
imply that \splititem{\decision}{$n,s$}{${\cal X}$} is
{\sf NP}-hard.
The problem \splititem{\decision}{$n,s$}{${\cal X}$} is in {\sf NP}
since given a partition, summing the sizes of the items (or items fractions) in each bin let us check in linear time whether the partition has equal bin sums.
\end{proof}

\begin{theorem}
\label{thm:from-shared-to-interval}
For any fixed integers $n\geq 2, s\geq 0$ and rational $v\geq 0$,
there is a polynomial-time reduction from 
\splititem{\asym\decision}{$n,s,v$}{${\cal X}, {\cal R}$},
to 
\interval{\asym\decision}{$n,u$}{${\cal X}, {\cal R}$}
for some rational number $u \geq s$.
\end{theorem}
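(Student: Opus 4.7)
The plan is to reduce by \emph{removing} the $s$ longest jobs from ${\cal X}$ and choosing $d$ so that the interval bound $S' + dM'$ on the truncated instance coincides exactly with the split bound $(1+t)S$. Concretely, given $({\cal X}, n, s, t, {\cal R})$, I would let $y_1 \geq \cdots \geq y_s$ be the $s$ longest jobs (of total length $y$), form ${\cal X}_r := {\cal X} \setminus \{y_1, \ldots, y_s\}$, compute $S' := S - y/A$ and $M' := (\max_{x \in {\cal X}_r} x)/A$, and set
\[
    d \;:=\; \frac{(1+t)S - S'}{M'} \;=\; \frac{tS + y/A}{M'}.
\]
The output instance is $({\cal X}_r, n, d, {\cal R})$ of \nRuniformintervalDpar. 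The edge case $s \geq m$ (when ${\cal X}_r$ would be empty) can be handled separately: every job may then be split, so the split instance is trivially a yes-instance and the reduction can emit any fixed yes-instance of the interval problem. To check $d \geq s$, I would use the fact that each of the $s$ removed jobs has length at least $\max_{x \in {\cal X}_r} x = AM'$, giving $y/A \geq sM'$; adding the nonnegative term $tS$ then forces $d \geq s$, and $d$ is clearly rational and polynomial-time computable.

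For the forward direction, I would start from a split partition with completion times $b_i \leq (1+t)S$ and at most $s$ split jobs; \Cref{largestObjects} lets me assume that any split job is among the top $s$. Removing all top-$s$ jobs from that partition yields a partition of ${\cal X}_r$ whose completion times $b'_i$ satisfy $b'_i \leq b_i \leq (1+t)S = S' + dM'$, so it is $d$-possible.

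For the backward direction, I would take a $d$-possible partition of ${\cal X}_r$ with loads $\ell'_i := r_i b'_i \leq (1+t)Sr_i$, note that each machine has spare capacity $C_i := (1+t)Sr_i - \ell'_i \geq 0$, and that these capacities sum to $(1+t)SA - (AS - y) = tSA + y \geq y$. I then need to pack the $s$ removed jobs (of total length exactly $y$) into these capacities. A line-cutting argument does this: lay the $s$ jobs end-to-end on a segment of length $y$ and place at most $n-1$ cuts producing consecutive pieces of lengths $\Delta_1, \ldots, \Delta_n$ with $\Delta_i \leq C_i$ (greedily filling the machines in order, which is feasible since $\sum_i C_i \geq y$ and each $C_i \geq 0$). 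A cut splits a job only when it lands strictly inside it, so the number of split jobs is at most $\min(s, n-1) \leq s$. Combining this assignment with the given partition of ${\cal X}_r$ yields a valid split partition of ${\cal X}$ with $b_i \leq (1+t)S$.

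The step I expect to require the most care is this last line-cutting argument: one must argue both that the spare capacity is sufficient \emph{in total} and that the packing can be realized while splitting at most $s$ of the removed jobs. Sufficiency is immediate from the choice of $d$, and the bound on split jobs is automatic because we have only $s$ jobs to redistribute in the first place. Everything else amounts to substituting definitions into the identity $S' + dM' = (1+t)S$ and invoking \Cref{largestObjects}.
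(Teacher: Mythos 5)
Your proof is correct and follows essentially the same approach as the paper: remove the $s$ longest jobs, set $d$ so that $S' + dM' = (1+t)S$, check $d \geq s$ via the observation that each removed job is at least as long as $AM'$, and invoke \Cref{largestObjects} for the forward direction. You spell out the backward direction more explicitly with a line-cutting argument and account for the degenerate case $s \geq m$, where the paper simply asserts the re-insertion ``is possible because the sum of all jobs is $AS \leq A(S+tS)$''; the underlying reduction is the same.
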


\begin{proof}
Given an instance ${\cal X}$ of  \splititem{\asym\decision}{$n,s,v$}{${\cal X}, {\cal R}$},
denote the sum of all items in ${\cal X}$ by $\sumratios{} S$ and the largest item size by $\sumratios{} M$ where $S, M \in \mathbb{Q}$.
Construct an instance ${\cal X}'$ of \interval{\asym\decision}{$n,u$}{${\cal X}', {\cal R}$} by removing the $s$ largest items from ${\cal X}$.
Denote the sum of remaining items by $\sumratios{} S'$ for some $S'\leq S$, and the largest remaining item size by $\sumratios{} M'$ for some $M'\leq M$.
Note that the size of every removed item is between $\sumratios{} M'$ and $\sumratios{} M$, so  $s M' \leq S - S' \leq sM$.
Set $u := (S+vS - S')/M'$, so $S'+uM' = S+vS$. Note that $u\geq (S-S')/M'\geq s$.

First, assume that ${\cal X}$ has a $v$-feasible partition with $s$ split items.
By \Cref{largestObjects}, we can assume that only the $s$ largest items are split.
Therefore, removing the $s$ largest items results in a partition of ${\cal X}'$ with no split items, where the relative sum in each bin is at most $S+vS = S' + u M'$.
This is a $u$-possible partition of ${\cal X}'$.

Second, assume that ${\cal X}'$ has a $u$-possible partition. In this partition, each bin relative sum is at most $S' + uM' = S+vS$, so it is a $v$-feasible partition of ${\cal X}'$.
To get a $v$-feasible partition of ${\cal X}$,
take the $s$ previously removed items and add them to the bins, possibly splitting some items between bins, such that the relative sum in each bin remains at most $S+vS$. This is possible since the sum of the items is $\sumratios{} S \leq \sumratios{} (S+vS)$. This construction is done in polynomial time.
\end{proof}
Combining \Cref{thm:from-shared-to-interval} with \Cref{thm:n>=3} provides a polynomial time algorithm to solve \splititem{\asym\decision}{$n,s,v$}{${\cal X}, {\cal R}$} for any fixed  $n\geq 3, s\geq n-2$ and rational $v\geq 0$.
The latter is used to solve the \splititem{\asym\minmax}{$n,s$}{${\cal X}, {\cal R}$} optimization problem
by using binary search on the parameter $v$ of the \splititem{\asym\decision}{$n,s,v$}{${\cal X}, {\cal R}$} problem. 

\subsection{Binary Search to solve \splititem{\asym\minmax}{$n,s$}{${\cal X}, {\cal R}$}}

We first prove a property of the optimal bin relative sum of the \splititem{\asym\minmax}{$n,s$}{${\cal X}, {\cal R}$} problem.
\begin{lemma}\label{lem:perfectbin}
The optimal value of an \splititem{\asym\minmax}{$n,s$}{${\cal X}, {\cal R}$} instance is either a perfect partition (that is, all bin relative sums are equal to the average bin relative sum $S$) or equal to $q /r_i$, where $q$ is an integer and $r_i \in {\cal R}$.
\end{lemma}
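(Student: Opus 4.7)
The plan is to split into two cases depending on whether the optimum $C^*$ equals the average $S = (\sum_j x_j)/A$. If $C^* = S$, summing $\sum_i r_i b_i = AS$ together with $b_i \leq C^* = S$ forces $b_i = S$ for every $i$, which gives a perfect partition (the first alternative of the lemma). The substance is therefore the case $C^* > S$, where I aim to exhibit an optimal partition in which every machine attaining the makespan processes only whole (non-split) jobs. The completion time of such a machine is then a sum of integer job lengths divided by its own speed $r_i$, which is exactly the form $q/r_i$ required.

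The key fact I exploit is that $C^* > S$ implies a strictly positive total slack $\sum_{i\notin I^*} r_i(C^*-b_i) = A(C^*-S) > 0$, where $I^* := \{i : b_i = C^*\}$ denotes the set of makespan-attaining machines; positivity guarantees at least one slack machine $i_s$ with $C^*-b_{i_s} > 0$. I then argue by minimality: among all partitions with makespan $C^*$ (and at most $s$ split jobs), fix one that minimises the number of incidences $(i,j)$ with $i \in I^*$ and $j$ a split job satisfying $x_{ij} > 0$. If this minimum is zero, every tight machine has only whole jobs and the lemma follows. Otherwise there exist a tight $i^*$ and a split $j$ with $x_{i^* j} > 0$; because $j$ is split, it also has positive fraction on some other machine $i''$.

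The local modification I plan is a two-step shift. If $i''$ is already slack, I skip directly to step 2; otherwise I first perform step 1, which moves a small positive fraction of $j$ from $i''$ to a slack machine $i_s$. Since $x_{i'' j} > 0$ and $C^* - b_{i_s} > 0$, any sufficiently small transfer fits and leaves $b_{i_s} \leq C^*$; after step 1 the slack machine $i_s$ carries positive fraction of $j$. Step 2 then shifts a positive fraction of $j$ from $i^*$ to $i_s$, with the shift chosen small enough to keep $b_{i_s} \leq C^*$ but positive. This strictly lowers $b_{i^*}$ below $C^*$, so $i^*$ leaves $I^*$ and every incidence of the form $(i^*, \cdot)$ disappears from the count; no other tight machine's completion time grows. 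Hence either the modified partition still has makespan $C^*$ with strictly fewer incidences — contradicting minimality — or its makespan is $< C^*$, contradicting the optimality of $C^*$.

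The main obstacle I expect is the quantitative verification that the two shifts can be chosen with strictly positive step sizes while preserving $b_i \leq C^*$ for every $i$. All the pertinent quantities ($x_{i^* j}$, $x_{i'' j}$, and $C^* - b_{i_s}$) are strictly positive, so one can always take both step sizes small enough. Since the number of split jobs never increases during the modifications, the $\le s$ constraint is preserved. Once minimality forces the incidence count to be zero, every machine in $I^*$ has only whole jobs, so $C^* = L_i/r_i$ is of the form $q/r_i$ (with $L_i$ the integer total of the whole-job lengths on $i$), completing the plan.
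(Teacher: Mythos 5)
Your proposal is correct and rests on the same central idea as the paper's proof: if a machine attaining the makespan carries a fractional piece of a split job, one can transfer a small amount of that piece to a machine with slack (which exists whenever the partition is not perfect) and thereby reduce, or eventually reduce, the makespan. The paper executes this in one stroke — move a tiny fraction off \emph{every} tight machine simultaneously to a single slack machine, which strictly lowers the makespan and contradicts optimality — so no extremal/minimality bookkeeping is needed. Your version instead fixes an optimal partition minimizing the number of (tight machine, split job) incidences and peels tight machines off one at a time; this works, but only if you insist on keeping $b_{i_s}$ strictly below $C^*$ after each transfer (you write $b_{i_s}\le C^*$, which would allow $i_s$ to become tight and potentially \emph{increase} the incidence count if $i_s$ already carries other split-job fractions). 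Your ``step 1'' is also unnecessary: one may shift a fraction of $j$ from $i^*$ directly to any slack machine $i_s$, whether or not $i_s$ already holds a piece of $j$, since this cannot increase the number of split jobs. With the strict inequality enforced, your argument is sound; the paper's version simply avoids the minimality machinery by moving from all tight machines at once.
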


\begin{proof}
Let OPT be the maximum bin relative sum in an optimal partition of ${\cal X}$ with $s$ split items.
If OPT is not equal to $q /r_i$, where $q$ is an integer and $r_i \in {\cal R}$, then the fractional part of OPT
must come from a split item (since the non-split items have integer sizes).
Assuming that the partition is not perfect, there is at least one bin with a relative bin sum smaller than OPT.
Move a small fraction of the split item from all bins with relative sum OPT to a bin with relative sum smaller than OPT. 
This yields a partition with a smaller maximum relative sum and the same set of split items --- a contradiction to optimality.
\end{proof}

Now, we show how to use binary search to solve the \splititem{\asym\minmax}{$n,s$}{${\cal X}, {\cal R}$} problem. The idea is to execute the  \splititem{\asym\decision}{$n,s,v$}{${\cal X}, {\cal R}$} algorithm several times, trimming the parameter $v$ at each execution in a binary search style.
When $v = \sumratios{}-1$, the answer to \splititem{\asym\decision}{$n,s,v$}{${\cal X}, {\cal R}$} is always ``yes'', since even if we put all items in a single bin, its sum is $\sumratios{} S  = (1+v)S$.
Moreover, by \Cref{lem:perfectbin}, the optimal bin relative sum value of \splititem{\asym\minmax}{$n,s$}{${\cal X}, {\cal R}$} is either equal to $q /r_i$ where $q$ is an integer and $r_i \in {\cal R}$ or equal to $S$. Therefore, in addition to considering the case where $v=0$, we only need to consider values of $v$ that differ by at least $1/\sumratios{}$, since if the bin relative sums are not equal to $S$ they are always integers, so they differ by at least $1/\sumratios{}$. 
Therefore, at most $\sumratios{}^2S$ different values of $v$ have to be checked, so the binary search requires at most $\log_2(\sumratios{}^2S)=2\log_2(\sumratios{}S)$ 
executions of \splititem{\asym\decision}{$n,s,v$}{${\cal X}, {\cal R}$}. This is polynomial in the size of the binary representation of the input.

\begin{corollary}
\label{cor:nsplitparmm}
For any fixed integers $n\geq 3$ and $s\geq n-2$,
\splititem{\asym\minmax}{$n,s$}{${\cal X}, {\cal R}$}
can be solved in $O(\text{poly}(m, \log (\sumratios{} S)))$ time.
\end{corollary}

We complete this result by providing a polynomial-time algorithm for the max-min version: \splititem{\maxmin}{$n, s$}{${\cal X}, {\cal R}$} for $s \geq n-2$.

\subsection{\splititem{\asym\maxmin}{$n, s$}{${\cal X}, {\cal R}$}}

We first prove a structural Lemma:
\begin{lemma}\label{lem:maxminstructure}
For any $n\geq 3$ and $s\geq n-2$, in any instance of \splititem{\asym\maxmin}{$n, s$}{${\cal X}, {\cal R}$}, either the output is perfect (all the relative sums are equal), or, the split items are shared only between the $n-1$ bins with the smaller relative sums, and their relative sums are equal.
\end{lemma}
\begin{proof}
    Assume w.l.o.g. that bin $n$ has the largest relative sum. Bin $n$ does not contain any part of any split item, since otherwise, we could move some part of the split item from bin $n$ to all smallest bin relative sums, and get a partition with a larger minimum relative sum.
    So all split items are shared only among the $n-1$ smallest bin relative sums, $1, 2, \dots, n-1$. Since we have $n-1$ bins and we are allowed to split $s\geq n-2$ items, bins $1, 2, \dots, n-1$ relative sums must be equal since otherwise, we could obtain a partition with a larger minimum value by running the cut-the-line algorithm on the contents of bins $1, 2, \dots, n-1$.
\end{proof}

We are ready to design the algorithm. The max-min version of \splititem{\asym\maxmin}{$n, s$}{${\cal X}, {\cal R}$} is based on the min-max version of the same problem. 
Recall that the sum of all items is equal to $\sumratios{}S$.
\begin{algorithm}[H] \caption{\qquad \splititem{\asym\maxmin}{$n, s$}{${\cal X}, {\cal R}$}} \label{alg:maxmin}
\begin{algorithmic}[1]
\State $(b_1,b_2,\dots,b_n) \longleftarrow$ \splititem{\asym\minmax}{$n, s$}{${\cal X}, {\cal R}$}; w.l.o.g. assume $b_1\leq b_2\leq \dots\leq b_n$.
\If{$b_n=S$} 
return $(b_1,b_2,\dots,b_n)$. \Comment{$b_1=b_2=\dots=b_n$ (perfect partition)}
\Else \Comment{$b_n > S$}
\State Divide the items in bins $1, 2, \dots, n-1$ into $n-1$ bins with  $b_1' = b_2' = \dots = b_{n-1}'= \frac{\sumratios{}S - r_nb_n}{\sumratios{}-r_n}$ 
(with at most $n-2$ split item);
\State return $(b_1',b_2',\dots,b_{n-1}',b_n)$
\EndIf
\end{algorithmic}
\end{algorithm}

\begin{theorem}\label{thm:maxminalgo}
\Cref{alg:maxmin} solves \splititem{\asym\maxmin}{$n, s$}{${\cal X}, {\cal R}$} for any fixed integer $n\geq 3$ and $s\geq n-2$,  in $O(\text{poly}(m, \log (\sumratios{} S)))$ time.
\end{theorem}
\begin{proof}
Let $V$ be the value of the smallest bin relative sum in the optimal max-min partition.

If $V=S$, then there is a perfect partition, and we find it in the first if.

Otherwise, $V <S$. By \Cref{lem:maxminstructure}, there is a partition in which bins $1, 2, \dots, n-1$ have relative sum  exactly $V$, and the split items are shared among them. So bin $n$ sum equals $r_nb_n = \sumratios{}S-(\sumratios{}-r_n)V$. So \splititem{\asym\minmax}{$n, s$}{${\cal X}, {\cal R}$} will find an partition with $r_nb_n \leq (\sumratios{}S-(\sumratios{}-r_n)V)$, and therefore $b_1,b_2,\dots,b_{n-1}$ will be at least $\frac{\sumratios{}S - r_nb_n}{\sumratios{}-r_n} \geq V$.

The complete running time $O(\text{poly}(m, \log{(\sumratios{}S)}))$  of \Cref{alg:maxmin} is justified by the running time of the \splititem{\asym\minmax}{$n, s$}{${\cal X}, {\cal R}$} problem, which is $O(\text{poly}(m, \log{(\sumratios{}S)}))$.
\end{proof}

\section{Partition with Splittings}

In this section, we analyze the last variant: \splittingempty{}. We recall that this variant bounds by $t$ the number of splittings (the number of times the items are split). 

\begin{theorem}\label{thm:sharing}
For any fixed integer $n\ge 2$ and fixed $t \in \mathbb{N}$ such that $t \leq n-2$,
the problem \splitting{\decision}{$n,t$}{${\cal X}$}
is {\sf NP}-complete.
\end{theorem}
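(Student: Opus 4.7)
My plan is to establish NP membership first (easy) and then NP-hardness by reduction (the main content). For NP membership: a certificate is a partition of ${\cal X}$ that specifies, for each job, which machine(s) receive it and in what fractions; given such a certificate, verifying that the total number of splittings is at most $s$ and that every machine completes exactly at the perfect time $S = (\sum_j x_j)/n$ takes linear time.

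For NP-hardness I plan to reduce from \textsf{Equal-Cardinality-Partition}, following the template of \Cref{thm:np-complete-d-possible}. Given an instance ${\cal X}_1$ with $2m'$ items of sum $2S''$, I will construct ${\cal X}'$ by appending to a suitably scaled copy of ${\cal X}_1$ a collection of ``big'' jobs all of size $nM$ equal to the maximum job size; the number of big jobs is a function of $n$ and $s$, chosen so that in the YES case the resulting perfect partition has a specific shape: two ``heavy'' machines receive the two halves of the EC-partition of ${\cal X}_1$ (unsplit), while the remaining $n-2$ ``light'' machines are filled by pieces obtained from line-cutting the big jobs. A direct piece count will then show that exactly $s$ splittings suffice. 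The degenerate case $s=0$ reduces immediately to the classical $n$-way perfect partition problem, which is NP-hard for every fixed $n \geq 2$.

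The main obstacle will be the converse direction: showing that in the NO case, no perfect partition with at most $s$ splittings can exist, despite the flexibility offered by splittings (e.g.\ the possibility of splitting ${\cal X}_1$-items instead of big jobs, or of distributing big-job pieces unevenly). My plan to control this flexibility is a structural argument in the spirit of \Cref{lem:structure}: because every big job exceeds $S$ by construction it must be split, and a careful count of the total number of pieces required to fill all $n$ machines to exactly $S$ will force each big job to split into a prescribed number of pieces, and will force the ${\cal X}_1$-items to lie unsplit in exactly two machines of sum $S''$ each --- which is precisely an EC-partition of ${\cal X}_1$. Since the entire construction is polynomial in the size of ${\cal X}_1$, this completes the reduction and the proof.
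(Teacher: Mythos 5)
Your plan has the right instinct---add large jobs so that the structure of a perfect partition is forced---but it is missing the two ingredients that actually make the converse direction work, and as stated it would not go through for all $s \le n-2$.

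The paper reduces from {\sf Subset Sum} (not {\sf Equal-Cardinality Partition}). Given ${\cal X}_1$ with sum $S$ and target $T$, it adds a job $x_1 = S+T$, a single \emph{giant} job $x_2 = 2S(s+1)-T > 2S\cdot s$, and $n-2-s$ \emph{auxiliary} jobs of size exactly $2S$, so that the perfect per-machine sum is exactly $2S$. The giant job is the linchpin of the converse: since $x_2 > 2S \cdot s$, in any perfect partition it must be spread over at least $s+1$ machines and so consume \emph{all} $s$ splittings by itself. This immediately forces every other job to be whole. Each auxiliary job, having size exactly equal to the per-machine sum, must then occupy a machine alone. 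Exactly one machine remains, and it must carry $x_1$ (otherwise its completion time is at most $S < 2S$); the ${\cal X}_1$-items \emph{not} on that machine sum to $T$, solving {\sf Subset Sum}. In the YES direction the line is cut across the $s+1$ machines not used by $x_1$ or the auxiliaries, giving exactly $s$ cuts.

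Your proposal replaces this with a collection of ``big'' jobs all of the same size $nM$, each exceeding the perfect per-machine sum. That forces each such job to split at least once, so if there are $k$ of them then $k \le s$; but even with $k = s$ (each split exactly once), nothing controls \emph{how} they split. A big job can be split into a piece of size exactly $S$ (filling a whole machine) and a leftover piece that then combines with ${\cal X}_1$-items on some other machine --- so the ${\cal X}_1$-items can scatter across many machines with big-job pieces filling the gaps in arbitrary amounts, and the ``careful piece count'' does not pin down the structure you need. Likewise, your YES direction of line-cutting the big jobs across $n-2$ machines yields $n-3$ cuts, which only matches $s$ in the special case $s = n-3$; you have no mechanism to vary the number of splittings continuously with $s$. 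Finally, the equal-cardinality constraint is superfluous here: it is needed in \Cref{thm:np-complete-d-possible} because of the job-inversion $x \mapsto nM-x$, which is not used in the splitting reduction; plain {\sf Subset Sum} is both sufficient and cleaner.
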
 

\begin{proof}
Given a partition with $n$ bins, $m$ items, and $t$ splittings, summing the size of each item (or fraction of item) in each bin allows us to check whether or not the partition is perfect in linear time. So, the problem is in {\sf NP}.

To prove that \splitting{\decision}{$n,t$}{${\cal X}$} is {\sf NP}-Hard, we apply a reduction from the {\sf Subset Sum} problem. We are given an instance ${\cal X}_1$ of {\sf Subset Sum} with $m$ items summing up to $S$ and target sum $T<S$.
We build an instance ${\cal X}_2$ of
\splitting{\decision}{$n,t$}{${\cal X}_2$}
by adding two items, $x_1,x_2$, such that $x_1=S+T$ and $x_2 = 2S(t+1) - T$ and $n-2-t$ auxiliary items of size $2S$. Notice that the sum of the items in ${\cal X}_2$ equals
\begin{align*}
&
S+(S+T)+2S(t+1)-T + 2S (n-2-t)
=
2S + 2S(t+1) + 2S (n-2-t)
\\&
=
2S\cdot(1+t+1+n-2-t)
=
2Sn.
\end{align*}
The goal is to partition items into $n$ bins with a sum of $2 S$ per bin, and at most $t$ splittings.

First, assume that there is a subset of items $W_1$ in ${\cal X}_1$ with a sum equal to $T$.
Define a set, $W_2$, of items that contains all items in ${\cal X}_1$ that are not in $W_1$, plus $x_1$. The sum of $W_2$ is $(S - T) + x_1 = S + T + S - T = 2S$.
Assign the items of $W_2$ to the first bin. 
Assign each auxiliary item to a different bin.
There are $n - (n-2-t + 1) = t+1$ bins left.
The sum of the remaining items is $2S(t+1)$.
Using the ``cut-the-line'' algorithm described in the introduction, these items can be partitioned into $t+1$ bins of equal sum $2 S$, with at most $t$ splittings.
All in all, there are $n$ bins with a sum of $2 S$ per bin, and the total number of splittings is at most $t$. 

Second, assume that there exists an equal partition for $n$ bins with $t$ splittings. Since $x_2 = 2S(t+1) - T = 2S\cdot t + (2S - T) > 2S\cdot t$, 
this item must be split between $t+1$ bins, which makes the total number of splittings at least $t$. Also, the auxiliary items must be assigned without splittings into $n-2-t$ different bins. There is $n - t - 1 - n + 2 + t = 1$ bin remaining, say bin $i$, containing only whole items, not containing any part of $x_2$, and not containing any auxiliary item. Bin $i$ must contain $x_1$, otherwise its sum is at most $S$ (sum of items in ${\cal X}_1$). Let $W_1$ be the items of ${\cal X}_1$ that are not in bin $i$. 
The sum of $W_1$ is 
$S - (2 S - x_1) = x_1 - S = T$, so it is a solution to ${\cal X}_1$.
\end{proof}

\section{Strong hardness proofs}\label{sec:n-part-of-input}

In this section, we prove that when $n$ is part of the input (and not a fixed parameter), the problems \intervalempty{}, \splititemempty{} and \splittingempty{} are strongly {\sf NP}-hard.

\subsection{The \interval{\decision}{$u$}{$n, {\cal X}$} problem}


\begin{theorem}\label{thm:n-input-hard}
For every fixed rational number $u\geq 0$,
\interval{\decision}{$u$}{$n, {\cal X}$} is strongly {\sf NP}-hard.
\end{theorem}

\begin{proof}
We apply a reduction from the 3-partition problem:
given a list ${\cal X}_1$ of $3m'$ positive integers with sum equal to $m'S'$, decide if they can be partitioned into $m'$ triplets such that the sum of each triplet is $S'$ (we call such a partition a \emph{triplet-partition}).
3-partition is proved to be strongly {\sf NP}-hard in \citep{DBLP:journals/siamcomp/GareyJ75}.
	
Given an instance ${\cal X}_1$ of 3-partition with $3 m'$ integers, define $n:=\lceil2(m'+u)\rceil$, where $u$ is the parameter of the problem \interval{\decision}{$u$}{$n, {\cal X}$}.
Define $M$ to be the sum of numbers in ${\cal X}_1$ divided by $m'n(1-\frac{u}{n-m'})$, so that the sum of numbers in ${\cal X}_1$ is $m'n(1-\frac{u}{n-m'}) M$. 
Since $n=\lceil2(m'+u)\rceil$, it follows that $1-\frac{u}{n-m'} > 0$, so 
$M$
is positive.
	The 3-partition problem decides if the items can be partitioned into $m'$ triplets such that the sum of each triplet is $n(1-\frac{u}{n-m'}) M$.
	We can assume w.l.o.g. that all items in ${\cal X}_1$ are at most $n(1-\frac{u}{n-m'}) M$, since if some item is larger than $n(1-\frac{u}{n-m'}) M$, the answer is necessarily ``no''.
	
	Construct an instance ${\cal X}_2$ of the 3-partition problem by replacing each item $x$ in ${\cal X}_1$ by $nM-x$. 
	So ${\cal X}_2$ contains $3 m'$ items between $n(\frac{u}{n-m'}) M$ and $n M$. Their sum, which we denote by $m' S'$, satisfies
	\begin{align*}
		m' S' = 3 m' \cdot n M - m'n\left(1-\frac{u}{n-m'}\right) M = m'n \left(2 + \frac{u}{n-m'}\right) M.
	\end{align*}
	Clearly, ${\cal X}_1$ has a triplet-partition (with sum $n\left(1-\frac{u}{n-m'}\right) M$) if and only if ${\cal X}_2$ has a triplet-partition (with sum $S' = n \left(2 + \frac{u}{n-m'}\right) M$).
	The construction is done in time polynomial in the size of ${\cal X}_1$.
	
	Construct an instance $(n, {\cal X}_3)$ of
    \interval{\decision}{$u$}{$n, {\cal X}_3$}
	by adding, to the items in ${\cal X}_{2}$, some $2(n-m')$ items equals to $n M$.
Note that $n M$ is indeed the largest item in ${\cal X}_3$. 
Denote the sum of items in ${\cal X}_3$ by $n S$. So
\begin{align*}
n S &=  m'S' + 2(n-m') \cdot n M
=  n \left(2m' + \frac{m'u}{n-m'} + 2(n-m') \right) \cdot M 
\\
&=  n \left(2n + \frac{m'u}{n-m'} \right) M;
\end{align*}
$$S + u M = \left(2n + \frac{m'u}{n-m'} + u \right) M
= \left(2n + \frac{n u}{n-m'} \right) M
= S',$$
so a partition of ${\cal X}_3$ is $u$-possible if and only if the bin sum of each of the $n$ bins in the partition is at most $S+uM = S'$.

We now prove that if ${\cal X}_2$ has a triplet-partition
then the corresponding instance ($n, {\cal X}_3$) has a $u$-possible partition, and vice versa.

If ${\cal X}_2$ has a triplet-partition, then the items of ${\cal X}_2$ can be partitioned into $m'$ bins of sum $S'$, and the additional $2(n-m')$ items can be divided into $n-m'$ bins of 2 items each. Note that the sum of 2 additional items is
\begin{align}
\label{eq:(m-1)nM}
2\cdot nM  
& = \left(2n + \frac{n u}{n-m'} \right) M - \frac{n u}{n-m'} M 
 = S'- \frac{n u}{n-m'} M  < S',
\end{align}
Since $n>m'$.

Conversely, suppose ${\cal X}_3$ has a $u$-possible partition. Let us analyze its structure.
\begin{itemize}
\item [--] Since the partition is $u$-possible, every $m'$ bin obtains together items with a total size of at most $m'(S + uM)$.
\item [--] So every $n-m'$ bins obtain items with total size at least $nS - m'(S + uM) = (n-m')S - m'uM$. 
\item [--] The largest $2(n-m')$ items in ${\cal X}_3$ 
sum up to exactly $(n-m')\cdot 2 nM 
= (n-m')S' - nu M$ by \eqref{eq:(m-1)nM},
which equals
$(n-m')(S + uM) - n u M 
= (n-m') S - m' uM$.
\item [--] Hence,
every $n-m'$ bins must together obtain \emph{at least} $2(n-m')$ items.
\item [--] 
Let $C$ be the set of $n-m'$ bins with the fewest items. 
We claim that bins in $C$ must obtain together \emph{exactly} $2(n-m')$ items.
Suppose by contradiction that bins in $C$ obtained $2(n-m')+1$ or more items. 
By the pigeonhole principle, there was a bin in $C$ that obtained at least $3$ items. By minimality of $C$, the $m'$ bins not in $C$ also obtained at least $3$ items each. The total number of items were $2(n-m') + 1 + 3m'$. This is a contradiction since the total number of items in ${\cal X}_3$ is only $2(n-m')+3m'$.
\item [--]
The sum of items in $C$ must still be at least 
$(n-m')S - m'uM = 2(n-m')\cdot nM$, so each item in $C$ must have the maximum size of $n M$.
\item [--] The other $m'$ bins obtain together $3m'$ items with total size $m'(S+uM)$, so each of these bins must have a bin sum of exactly $S+uM$. Since $2 \cdot nM < S + uM$ by \eqref{eq:(m-1)nM}, each of these $m'$ bins must obtain exactly 3 items.
\end{itemize}
These latter $m'$ bins correspond to the bins in a triplet-partition of ${\cal X}_2$. 
This construction is done in time polynomial in the 
size of ${\cal X}_2$, since we added $2(n-m')$ items to the initial $3m'$ items,
so the new number of items is $m'+2n < m'+4(m'+u)+1 = 5m'+4u+1$.
For every constant $u$, this number is linear in the size of  ${\cal X}_2$.
\end{proof}

\subsection{The \splititem{\decision}{$s$}{$n, {\cal X}$} problem}

\begin{theorem}\label{thm:splitreducinter}
    For any fixed integer $s=u\geq 0$,
    there is a polynomial-time reduction from
    \interval{\decision}{$u$}{$n, {\cal X}$}
    to
    \splititem{\decision}{$s$}{$n, {\cal X}$}.
\end{theorem}

\begin{proof}
Given an instance $(n, {\cal X})$ of 
\interval{\decision}{$u$}{$n, {\cal X}$}, construct an instance
$( n', {\cal X}')$ of \splititem{\decision}{$s$}{$n', {\cal X}'$}
by setting $n'=n$
and adding $u=s$ items of size $nM$, where $nM$ is the size of the largest item in ${\cal X}$.
	
First, assume that ($n, {\cal X})$ has a $u$-possible partition. Then there are $n$ bins with sum at most $S+u M$,
where $S$ is the sum of the items in ${\cal X}$ divided by $n$. Take the $u$ added items of size $n M$ and add them to the bins, possibly splitting some items between bins, such that the sum of each bin becomes exactly $S + u M$. This is possible because the sum of the items in $(n', {\cal X}')$ is $nS + unM=n(S+uM)$. The result is a perfect partition of $(n', {\cal X}')$ with at most $u=s$ split items.
	
Second, assume that $(n', {\cal X}')$ has a perfect partition with at most $s$ split items.
Then there are $n'$ bins with a sum of exactly $S+u M$. By 
\Cref{largestObjects}
we can assume that the split items are the largest ones, which are the $u$ added items of size $nM$. Remove these items to get a partition of ($n, {\cal X})$. The bin sum in each bin is now at most $S+uM$, so the partition is $u$-possible.
	
This construction is done in polynomial time, which completes the proof.
\end{proof}

Combining \Cref{thm:n-input-hard} and \Cref{thm:splitreducinter} gives:

\begin{corollary}\label{cor:splitstronglyhard}
For every fixed integer $s\geq 0$,
the problem \splititem{\decision}{$s$}{$n, {\cal X}$} is strongly {\sf NP}-hard.
\end{corollary}

\subsection{The \splitting{\decision}{$t$}{$n, {\cal X}$} problem}

\begin{theorem}\label{thm:splittingstrong}
For every fixed integer $t \geq 0$, \splitting{\decision}{$t$}{$n, {\cal X}$} is strongly {\sf NP}-hard.
\end{theorem}

\begin{proof}
By reduction from 3-partition. Given a finite multiset $D$ of $3p$ positive integers $d_1,\ldots,d_{3p}$  summing up to $pS'$, we have to decide if they can be partitioned into $p$ triplets such that the sum of each triplet is $S'$. 
We construct an instance of \splitting{\decision}{$t$}{$n, {\cal X}$} with $n=p+t+1$ bins and $m=3p+1$ items with the following values:
\begin{itemize}
\item The value of each item $x_i\in \{1,\ldots, 3p\}$ is $d_{x_i} + S'$;
\item The value of item $3p+1$ is 
$(t+1)4S'$. 
\end{itemize}
The total value of all items is $\sum_{x_i=1}{3p}(d_{x_i} + S') + (t+1)4S'
= 
pS' + 3pS' + (t+1)4S'
=
n\cdot 4S'
$, so the partition is perfect if and only if each bin sum is exactly $4S'$.

Any solution to 3-partition gives us $p$ triplets, each of which is a subset of $\{d_1,\ldots,d_{3p}\}$ with sum equal to $S'$. adding $S'$ to the value of each item in the triplet gives us $p$ bins with sums equal to $4 S'$. The remaining $t+1$ bins are constructed by cutting item $3p+1$ into $t+1$ parts with value $4S'$ each.

Conversely, suppose there is a perfect partition with at most $t$ splittings. Item $3p+1$ must be cut $t$ times since otherwise, one part will be bigger than $4S'$. There are $3p$ remaining items and $p$ remaining bins, and no splitting is allowed anymore. 
Each bin must receive exactly $3$ items since otherwise, one bin sum will be larger than $4S'$.
Removing $S'$ from the value of each item yields a solution to 3-partition. 
\end{proof}

\section{Conclusion and Future Directions}

We presented three variants of the $n$-way number partitioning problem. 

In the language of fair item allocation, we have solved the problem of finding a fair allocation among $n$ agents with asymmetric valuations, when the ownership of some $s$ items may be split between agents. 
When agents may have different valuations, there are various fairness notions, such as \emph{proportionality}, \emph{envy-freeness} or \emph{equitability}. 
A future research direction is to develop algorithms for finding such allocations with a bounded number of shared items. Some results for proportional allocation among three agents with different valuations are based on the algorithms in the present paper.

In the language of machine scheduling, \splititem{\asym\minmax}{$n,s$}{${\cal X}, {\cal R}$} corresponds to finding a schedule that minimizes the makespan on $n$ uniform machines with speeds ${\cal R}$ when $s$ jobs can be split between the machines; 
\interval{\asym\decision}{$n,u$}{${\cal X}, {\cal R}$} corresponds to finding a schedule in which the makespan is in a given interval. 
Studying the more general setting of \emph{unrelated machines} may be interesting.

Our analysis shows the similarities and differences between these variants and the more common notion of {\sf FPTAS}. 
One may view our results as introducing a new kind of approximation that approximates a decision problem by returning ``yes'' if and only if there exists a solution between $PER$ and $(1+v)\cdot PER$, where $PER$ represents the value of a perfect solution. For the $n$-way number partitioning problem, a perfect solution is easy to define: it is a  partition with equal bin sums.
A more general definition of $PER$ could be the solution to the fractional relaxation of an integer linear program representing the problem.
As shown, {\sf NP}-hard decision problems may become tractable when $v$ is sufficiently large.


\backmatter

\bmhead{Acknowledgements}

The paper started from discussions in the stack exchange network:\\
    (1) \href{https://cstheory.stackexchange.com/q/42275}{https://cstheory.stackexchange.com/q/42275}; \\
    (2) \href{https://cs.stackexchange.com/a/141322}{https://cs.stackexchange.com/a/141322}. \\
\splititem{\decision}{$3, 1$}{${\cal X}$} was first solved by Mikhail Rudoy using case analysis. The relation to FPTAS was raised by Chao Xu. We are also grateful to John L. \href{https://cs.stackexchange.com/a/149567}{https://cs.stackexchange.com/a/149567}.

\section*{Declarations}

\begin{itemize}
\item Funding.
Samuel Bismuth: Israel Science Foundation grant no. 712/20.
Erel Segal-Halevi: Israel Science Foundation grant no. 712/20.
\end{itemize}






\begin{appendices}

\section{Additional Results}

In this section, we state some additional results, complementing the results given in the paper.

\subsection{An {\sf FPTAS} for \splititem{\minmax}{$n, s$}{${\cal X}$} when $s<n-2$}\label{sec:approximations}

Although we provide a polynomial-time algorithm for the case $s=n-2$, our results suggest possible solutions for the general case too.

\begin{theorem}\label{thm:FPTAS_split}
	There exists an {\sf FPTAS} for \splititem{\asym\minmax}{$n, s$}{${\cal X}, {\cal R}$} for any $n$ and $s$, given any rational $\epsilon$.
\end{theorem}
\begin{proof}
	Consider the following algorithm:
	\begin{algorithm}[!h] \caption{\qquad {\sf FPTAS}(\splititem{\asym\minmax}{$n, s$}{${\cal X}, {\cal R}$}, $\epsilon$)} \label{alg:FPTAS_split}
		\begin{algorithmic}[1]
			\State Let ${\cal X}_1 := $ the $s$ largest items in ${\cal X}$. Let ${\cal X}_2 := $ the remaining items. 
			\State Solve {\sf FPTAS}(\partition{\asym\minmax}{$n$}{${\cal X}_2, {\cal R}$}, $\epsilon)$.
			\State Split the items in ${\cal X}_1$ among the bins such that the resulting partition is as equal as possible.
		\end{algorithmic}
	\end{algorithm}	 
The run time of the algorithm depends on {\sf FPTAS}(\partition{\asym\minmax}{$n$}{${\cal X}_2, {\cal R}$}, $\epsilon$) so, it is polynomial in the length of the input ${\cal X}$ and $1/\epsilon$.

We now prove that \Cref{alg:FPTAS_split} is correct. Let $OPT$ be the optimal solution of \splititem{\asym\minmax}{$n, s$}{${\cal X}, {\cal R}$} and $OPT_2$ be the optimal solution of \partition{\asym\minmax}{$n$}{${\cal X}_2, {\cal R}$}.
By \Cref{largestObjects}, we can assume that only the $s$ largest items are split.
By definition, {\sf FPTAS}(\partition{\asym\minmax}{$n$}{${\cal X}_2, {\cal R}$}, $\epsilon$) returns a partition of ${\cal X}_2$ where the largest bin relative sum is at most $(1+\epsilon)OPT_2$.
After splitting the $s$ largest items in ${\cal X}$, if the algorithm outputs a perfect partition (with a value equal to  $(\sum_{i\in {\cal X}} x_i)/\sumratios{}$), then we are done.
Otherwise, the output of our algorithm must be equal to  $(1+\epsilon)OPT_2$. This is because, if the partition is not perfect, no part of the split items is allocated to the bin with the largest sum.
Obviously, $OPT_2 \leq OPT$ since \splititem{\asym\minmax}{$n, s$}{${\cal X}, {\cal R}$} has to allocate the same whole items as \partition{\asym\minmax}{$n$}{${\cal X}_2, {\cal R}$}, plus $s$ additional split items.
Therefore, \Cref{alg:FPTAS_split} returns a partition where the largest relative sum is at most $(1+\epsilon)OPT$.
\end{proof}

\subsection{Experiments} \label{sec:experiments}

What is the effect of the number of split items $s$ on the quality of the optimal partition in \splititem{\minmax}{$n,s$}{${\cal X}$}?
We explored this question using experiments on random instances. 
For simplicity, we used the identical bins version of the problem \partition{\minmax}{$n,s$}{${\cal X}$}.
Since the problem is {\sf NP}-hard for $s<n-2$, we used a heuristic algorithm that we describe next.

\subsubsection{\splititem{\minmax}{$n,s$}{${\cal X}$}: a fast algorithm for every $n$ and $s$}

\Cref{thm:from-shared-to-interval} implies that we can solve any instance of \splititem{\minmax}{$n,s$}{${\cal X}$} in the following way (see \Cref{alg:minmax} for further details):
\begin{enumerate}
    \item Let ${\cal X}_1 := $ the $s$ largest items in ${\cal X}$. Let ${\cal X}_2 := $ the remaining items. 
    \item Solve \partition{\minmax}{$n$}{${\cal X}_2$}.
    \item Split the items in ${\cal X}_1$ among the bins such that the resulting partition is as equal as possible.
\end{enumerate} 

While \partition{\minmax}{$n$}{${\cal X}$} is {\sf NP}-hard, several heuristic algorithms can solve medium-sized instances in reasonable time \cite{SchreiberOptimal}. 
We are interested in algorithms based on state-space search, such as the Complete Greedy Algorithm (CGA) \cite{KorfCGA}.
CGA searches through all partitions by adding to each bin the largest number not yet added to any bin so that bins with smaller sums are prioritized. 
This is done depth-first, meaning that the smallest of the input numbers are shuffled between different parts before larger input numbers are. 
In addition to the heuristics that make CGA run fast in practice, we add a new heuristic, specific to our setting: we stop the search whenever it finds a solution in which the maximum sum is at most $S := (\sum_{x\in X} x)/n$.
This is because, by \Cref{largestObjects}, we can then divide the $s$ split items among the bins and get a perfect partition, in which every bin sum equals $S$.

\Cref{alg:minmax} uses the following variant of the Complete Greedy Algorithm (CGA):
for any input list $Y$ and upper bound $S$, it finds an $n$-way partition of $Y$ where the maximum bin sum is at most $S$; if such a partition does not exist, it finds a partition that minimizes the maximum bin sum. 
While its worst-case run-time is $O(2^m)$, it runs fast on practical instances \citep{KorfCGA}.
Note that other practical algorithms for optimal number partitioning could also be used.
Denote the $n$-way partition obtained by the CGA for a given $Y$ and a given upper bound $S$ by CGA($Y, S$). 

\begin{algorithm}
\caption{\qquad \splititem{\minmax}{$n,s$}{${\cal X}$}}\label{alg:minmax}
\begin{algorithmic}[1]
\State 
Order the items such that $x_1 \geq x_2 \geq \cdots \geq x_m$
\State
${\cal X}_1 \longleftarrow \{x_1, \ldots, x_s\}$
\Comment{$s$ largest items in $\mathcal X$}
\State 
$C  \longleftarrow   \sum_{x \in {\cal X}_1} x $
\State 
${\cal X}_2 \longleftarrow X \setminus {\cal X}_1$.
\State 
$P_2 \longleftarrow CGA({\cal X}_2, S)$
\Comment{optimal (min-max) $n$-way partition of ${\cal X}_2$}
\State
The bin sums in $P_2$ are $b_1,\ldots,b_n$, and we order the bins such that $b_1\leq \cdots \leq b_n$
\For{$i \longleftarrow 1$ to $n-1$}
\State
$B  \longleftarrow  i \cdot (b_{i+1} - b_{i})$
\Comment{$B\geq 0$ since $b_{i+1}\geq b_i$}
\If{$C < B$}  
\Comment{Not enough split items to attain $b_{i+1}$}
\State Break
\Else 
\Comment{Use split items to increase $i$ lowest bins to $b_{i+1}$}
\State 
$C  \longleftarrow   C - B$
\State
$b_1 , \cdots , b_{i} \longleftarrow b_{i+1}$
\EndIf
\EndFor
\State
$b_1, \cdots , b_{i} \longleftarrow b_{i} + C/i$
\end{algorithmic}
\end{algorithm}
Based on \Cref{largestObjects},
we may assume that only the $s$ largest items are split. We remove these items from the instance and keep their sum in the variable $C$.
Then we compute a min-max partition of the remaining items.
Then, we try to split the amount $C$ among the bins, such that the maximum bin sum remains as small as possible.
In each iteration $i$, we use some of the split items for increasing the sums in the $i$ lowest-sum such that their sum becomes equal to the $(i+1)$-th bin.
Finally, we divide the remaining amount of split items equally among the lowest-sum bins.

\subsubsection{Experiments}

In \Cref{fig:plot} we show that the possibility of splitting brings the optimal partition closer to the perfect partition, even when $s$ is much smaller than $n-1$.
We ran several instances of the \splititem{\minmax}{$n,s$}{${\cal X}$} optimization problem
on random inputs drawn from an exponential distribution, and from
a uniform distribution with values between $w M$ and $M$, for $w \in \{0, 0.5, 0.9, 0.99\}$.
The number of items was $m\in\{10, 13, 15\}$, and number of bits in the item sizes was $16$ or $32$. Finally, the number of bins was between 2 and 10, each one represented by one curve in \Cref{fig:plot}. We ran our algorithm on seven different instances of each combination 
and plotted the mean of all instances.
The $x$-axis of the plot represents the number of split items, and the $y$-axis denotes the difference between the optimal and the perfect partition in percent: $100 \cdot (OPT - S) / S$. 

\begin{figure}[h!]
\begin{center}
\includegraphics[scale=0.45]{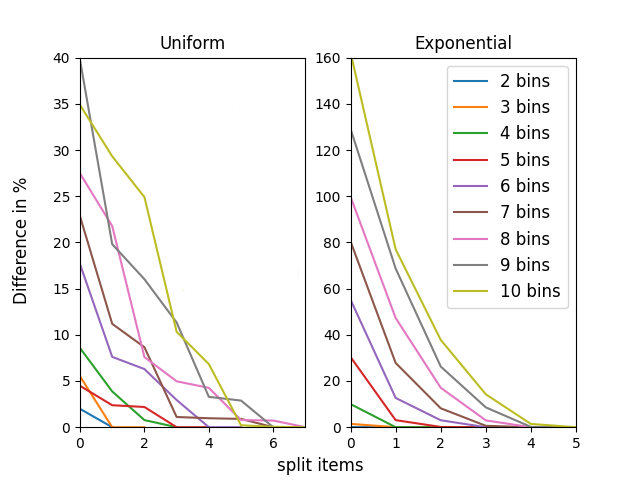}
\end{center}
\caption{Difference between the optimal and the perfect partition, in percent.
The plots show results for uniform distribution with $r=0$ (left plot) and exponential distribution (right plot). In both plots, $m=13$ and the item size has 16 bits.
The results for other values of the parameters are qualitatively similar; we present them in 
the supplementary information.
}
\label{fig:plot}
\end{figure}

\section{Further Related Work}\label{sec:further-related-work}

In this section, we survey some other related work, completing \Cref{sec:related-work}.

\textbf{Bounded splitting in fair division:} Goldberg et al. \cite{goldberg2020consensus} studied the problem of \emph{consensus partitioning}. In this problem, there are $n$ agents with different valuations, and the goal is to partition a list of items into some $k$ subsets (where $k$ and $n$ may be different), 
such that the value of each subset is $1/k$ for each agent.
They prove that a consensus partitioning with at most $n(k-1)$ split items can be found in polynomial time. 
The $n(k-1)$ is tight in the worst case.

Most similar to our paper is the recent work of \cite{sandomirskiy2019fair}.
Their goal is to find an allocation among $n$ agents with different valuations, which is both fair and \emph{fractionally Pareto-optimal {\sf (fPO)}}, a property stronger than Pareto-optimality
(there is no other discrete or fractional allocation
where some agent gains and no agent loses).
This is a very strong requirement: when $n$ is fixed, and the valuations are \emph{non-degenerate} (i.e., for every two agents, no two items have the same value-ratio),
the number of {\sf fPO} allocations is polynomial in $m$, and it is possible to enumerate all such allocations in polynomial time.
Based on this observation, they present an algorithm that finds an allocation with the smallest number of split items, among all allocations that are fair and {\sf fPO}. 
In contrast, in our paper, we do not require {\sf fPO}, which may allow allocations with fewer split items or splittings. However, the number of potential allocations becomes exponential, so enumerating them all is no longer feasible.

\textbf{Splitting in job scheduling:} The generalized multiprocessor scheduling (GMS) has two variants. 
In the first variant,
the total number of preemptions is bounded by some fixed integer.
In the second variant, each job $j$ has an associated parameter that bounds the number of times $j$ can be preempted.
In both variants, the goal is to find a schedule that minimizes the makespan subject to the preemption constraints.
A {\sf PTAS} for GMS with a global preemption bound, and another {\sf PTAS} for GMS with job-wise preemption bound when the number of machines is a fixed constant is presented by \cite{DBLP:journals/algorithmica/ShachnaiTW05}.

For job scheduling with splittings, where different parts of a job may be processed simultaneously on different machines, various objectives have been studied. There are many variants including different setup times. In machine scheduling, the ``setup time'' refers to the time required to prepare a machine for a specific job or task. Sequence-dependent setup time refers to a situation where the setup time required for a job depends on the job that came before it, rather than being constant for all jobs (independent job setup time). 

For identical machines, \cite{DBLP:journals/cor/ShimK08}, suggests a branch and bound algorithm with the objective to minimize the total tardiness with independent job setup time. A heuristic to the same problem, with the objective of minimizing the makespan is proposed by \cite{doi:10.1080/07408170304382} and \cite{doi:10.1080/00207540410001720745} suggests a two-phase heuristic algorithm with the objective of minimizing total tardiness.

For uniform machines, \cite{DBLP:journals/cor/KimL21} studies a variant with dedicated machines (there are some dedicated machines for each job), sequence-dependent setup times, and limited setup resources (jobs require setup operators that are limited) with the objective to minimize the makespan.

\textbf{Fractional bin-packing: } 
There are two main variants of fractional bin-packing. In bin-packing with size-increasing fragmentation (BP-SIF), each item may be fragmented; overhead units are added to the size of every fragment. In bin-packing with size-preserving fragmentation (BP-SPF) each item has a size and a cost; fragmenting an item increases its cost but does not change its size.

Shachnai et al. \cite{DBLP:journals/mst/ShachnaiTY08} develop approximation schemes for BP-SIF and BP-SPF; a dual {\sf PTAS} (a {\sf PTAS} for the dual version of the problem), an asymptotic {\sf PTAS} called {\sf APTAS} and a dual asymptotic {\sf FPTAS} called {\sf AFPTAS} for both versions.

Also, the fractional knapsack problem with penalties is recently introduced by \cite{DBLP:journals/eor/MalagutiMPP19}. They develop an {\sf FPTAS} and a dynamic program for the problem, and they show an extensive computational study comparing the performance of their models.

\section{Technical Details and Omitted Proofs}

\subsection{Variant of the \interval{\asym\decision}{$n=2,v$}{${\cal X}, {\cal R}$} with two $v$-s}\label{sec:vv-variant}
We need two new variants of the \interval{\asym\decision}{$n,v$}{${\cal X}, {\cal R}$} problem. We first define the \interval{\asym\decision}{$n=2, v_1, v_2$}{${\cal X}, {\cal R}$} problem as follows. For two fixed rationals $v_1, v_2$ and a set of ratios ${\cal R}=(r_1,r_2)$ 
such that $r_1v_1+r_2v_2 >0$,
\begin{quote}
	\interval{\asym\decision}{$n=2, v_1, v_2$}{${\cal X}, {\cal R}, m_1, m_2$}: \quad
	Decide if there exists a partition of ${\cal X}$ among two bins in which the relative sum of every bin $i$ is at most $(1+v_i) S$ where bin $q$ contains exactly $m_q$ items.
\end{quote}
If such a partition exists, we call it a \emph{$vv$-feasible} partition. We also need a variant with a critical coordinate,
\begin{quote}
    \interval{\asym\decision}{$n=2, v_1, v_2, i$}{${\cal X}, {\cal R}, m_1, m_2$}: \quad
	Minimize $\max(b_1,b_2)$ subject to $b_i \leq (1+v_i) \cdot S$ where $b_1, b_2$ are bin relative sums in a $2$-way partition of ${\cal X}$ where bin $q$ contains exactly $m_q$ items.
\end{quote}
We prove in 
the supplementary information
that \interval{\asym\decision}{$n=2, v_1, v_2, i$}{${\cal X}, {\cal R}, m_1, m_2$} has an {\sf FPTAS}.
Using similar techniques as in \Cref{alg:UniInterval}, we show a polynomial time algorithm for \interval{\asym\decision}{$n=2, v_1, v_2$}{${\cal X}, {\cal R}, m_1, m_2$}. 

\begin{lemma} \label{lem:twottuniform}
	Let $\epsilon=(r_1v_1+r_2v_2)/(r_1+r_2)>0$.
	If for every $i \in \{1,2\}$ and $j = 2/i$ , {\sf FPTAS}$($\interval{\asym\decision}{$n=2, v_1, v_2, i$}{${\cal X}, {\cal R}, m_1, m_2$}, $\epsilon) > (1+v_j) \cdot S$, then there is no $vv$-feasible $2$-way partition of ${\cal X}$.
\end{lemma}

\begin{proof}
	First set $i=1$ and $j=2$. If {\sf FPTAS}$($\interval{\asym\decision}{$n=2, v_1, v_2, i$}{${\cal X}, {\cal R}, m_1, m_2$}, $\epsilon)$ is greater than $r_2(1+v_2)\cdot S$, by definition of {\sf FPTAS}, 
    \begin{align*}
         \text{{\sf FPTAS}$($\interval{\asym\decision}{$n=2, v_1, v_2, i$}{${\cal X}, {\cal R}, m_1, m_2$}, $\epsilon)$} > r_2(1+v_2)\cdot S / (1+\epsilon) \\
         = r_2(1+v_2)\cdot S / (1+(r_1v_1 + r_2v_2)/(r_1+r_2)) = r_2 \frac{1+v_2}{r_1(1+v_1)+r_2(1+v_2)} \cdot (r_1+r_2)S.
    \end{align*}
	Similarly, for $i=2$ and $j=1$, we have 
    \begin{align*}
         \text{{\sf FPTAS}$($\interval{\asym\decision}{$n=2, v_1, v_2, i$}{${\cal X}, {\cal R}, m_1, m_2$}, $\epsilon)$} \\ > r_1 \frac{1+v_1}{r_1(1+v_1)+r_2(1+v_2)} \cdot (r_1+r_2)S.
    \end{align*}
    Therefore, in any partition, if $r_1b_1 \leq r_1(1+v_1) \cdot S$ then, 
    $$r_2b_2 > r_2 \frac{1+v_2}{r_1(1+v_1)+r_2(1+v_2)} \cdot (r_1+r_2)S,$$
    and if $r_2b_2 \leq r_2(1+v_2) \cdot S$ then, 
    $$r_1b_1 > r_1 \frac{1+v_1}{r_1(1+v_1)+r_2(1+v_2)} \cdot (r_1+r_2)S.$$
    Since a $vv$-feasible partition requires both $r_1b_1 \leq r_1(1+v_1) \cdot S$ and $r_2b_2 \leq r_2(1+v_2) \cdot S$, we must have 
	\begin{align*}
		r_1b_1 + r_2b_2 & > r_1 \frac{1+v_1}{r_1(1+v_1)+r_2(1+v_2)} \cdot (r_1+r_2)S \\
        & ~~~~~~~~~~~~~~~~~~~~~~~~~~~~~~ + r_2 \frac{1+v_2}{r_1(1+v_1)+r_2(1+v_2)} \cdot (r_1+r_2)S \\
		& = \frac{r_1(1+v_1)+r_2(1+v_2)}{r_1(1+v_1)+r_2(1+v_2)} \cdot (r_1+r_2)S \\
		& = (r_1+r_2)S, 
	\end{align*}
	by the problem definition, the sum $r_1b_1 + r_2b_2 = (r_1+r_2)S$, so we reach a contradiction.
\end{proof}

We are ready to design our algorithm.

\begin{algorithm}[!h] \caption{\qquad\interval{\asym\decision}{$n=2, v_1, v_2$}{${\cal X}, {\cal R}, m_1, m_2$}} \label{alg:UnittInterval}
	\begin{algorithmic}[1]
		\For {$i \in \{1, 2\}$} 
		\State Set $j := 2/i$.
		\State 
		Run the FPTAS for \interval{\asym\decision}{$n=2, v_1, v_2, i$}{${\cal X}, {\cal R}, m_1, m_2$} with critical coordinate $i$ and $\epsilon= (r_1v_1+r_2v_2)/(r_1+r_2)$.
		\State 
		Let $b_j$ be the relative sum of bin $j$ as returned by this FPTAS.
		\If{$b_j \le (1+v_j)\cdot S$} 
		\State {\sf return} ``yes''
		\EndIf
		\EndFor
		\State {\sf return} ``no''
	\end{algorithmic}
\end{algorithm}	

\begin{theorem}\label{thm:UnittInterval}
	For any rationals $v_1, v_2$ and every ratios $r_1,r_2$, with $r_1v_1+r_2v_2>0$, \Cref{alg:UnittInterval} solves the problem \interval{\asym\decision}{$n=2, v_1, v_2$}{${\cal X}, {\cal R}, m_1, m_2$} in time $O(\text{poly}(m, \log{(\sumratios{} S)}, (r_1+r_2)/(r_1v_1+r_2v_2)))$, where $m=m_1+m_2$ is the number of input items in ${\cal X}$.
\end{theorem}

\begin{proof}
	The run-time of \Cref{alg:UnittInterval} is dominated by the run-times of the two {\sf FPTAS}-s for the problem \interval{\asym\decision}{$n=2, v_1, v_2, i$}{${\cal X}, {\cal R}, m_1, m_2$}, which by definition of {\sf FPTAS} is in $O(\text{poly}(m,\log{(\sumratios{}S)},1/\epsilon))$, which is  $O(\text{poly}(m,\log{(\sumratios{}S)},(r_1+r_2)/(r_1v_1+r_2v_2)))$ (we show in \Cref{sec:running-time2} that the exact run-time is $O(\frac{m}{r_1v_1+r_2v_2} \log{(\sumratios{}S)})$).
	
	If $b_j$, the returned relative bin sum of one of the {\sf FPTAS}(\interval{\asym\decision}{$n=2, v_1, v_2, i$}{${\cal X}, {\cal R}, m_1, m_2$}, $\epsilon= (r_1v_1+r_2v_2)/(r_1+r_2)$), is at most $(1+v_j)\cdot S$, then the partition found by the {\sf FPTAS} is  $vv$-feasible, so \Cref{alg:UnittInterval} answers ``yes'' correctly (the number of items in the bins is respected by the {\sf FPTAS}).
	Otherwise, by \Cref{lem:twottuniform}, there is no $vv$-feasible partition, 
	and \Cref{alg:UnittInterval} answers ``no'' correctly. 
\end{proof}

\subsection{Properties of instances with at least two almost-full bins (\Cref{lem:structure})}

\subsubsection{Simple example}
\label{sec:example-lem}
In this section, we provide a simple example of \Cref{lem:structure}.

We show an instance with a solution with two almost-full bins, and no solution with at most one almost-full bin.
There are $n=5$ identical bins (so $\sumratios{}=n=5$), $u=n-2=3$, ${\cal X} = (n + 2) \cdot [1] = (1, 1, 1, 1, 1, 1, 1)$. Then, with this setting, $m=7,~ 5M=1,~ 5S=7,~ S+uM = 7/5+3/5=2$. Now, we can compute $v$ and $\epsilon$: $v = u M / S = (3/5) /(7/5)=3/7$ and $\epsilon = v / (4 \cdot m^2 ) = 3/1372$. That is, an almost-full bin is a bin with a sum larger than $(1 + v) \cdot S / (1 + \epsilon) = ((10/7) \cdot (7/5) )/ (1375/1372) = 2/(1375/1372) = 2744/1375 < 2$. Also, a $u$-feasible solution of the instance is a solution where the largest bin sum is at most $(1+v)S = 2$. Therefore, one possible partition is $P_1 = ((1,1),(1,1),(1),(1),(1))$. This partition is $u$-possible since the largest bin sum equals $2$ as required. It has two almost-full bins: bins 1 and 2. 
Note that there is also a partition with three almost-full bins:
$((1,1),(1,1),(1,1),(1),())$.
However, there is no solution with at most one almost-full bin.
Let us see that $P_1$ satisfies all the properties in the structure Lemma.
\begin{enumerate}[label=(\alph*)]
		\item 
		Exactly two bins  are almost-full: bins 1 and 2.
		
		\item 
		The sum of bins 3,4,5 (not-almost-full) satisfies
		\begin{align*}
		\left(1 - \frac{2}{n-2} v- 2 \epsilon \right) \cdot S
		& \leq
		b_3,b_4,b_5
		\leq
		\left(1 - \frac{2}{n-2}v + (n - 1) 2\epsilon \right) \cdot S \\
		(1 - 2/7  - 3/686) \cdot 7/5
		& \leq
		b_3,b_4,b_5
		\leq
		(1 - 2/7  + 6/342) \cdot 7/5
		\\
		0.99387755102
		& \leq
		b_3,b_4,b_5
		\leq
		1.02456140351,
		\end{align*}
		which is true since $b_3=b_4=b_5=1$
		
		\item 
		Every almost-full bin contains only \emph{big-items}, items largest than $ n S (\frac{v}{n-2} - 2 \epsilon) = 7 (1/7 - 3/686) = 0.9693877551$. Since every item equals  1, the claim holds. 

		\item 
		Every not-almost-full bin contains big-items that are larger or equal to every item in bin 1,2, or \emph{small-items}. Since all items are big and have the same size, the claim holds trivially. 
		
		\item 
        All items are big, so $m_b = m = 7$. Denote  by $m_i$ the number of big-items in bin $i$. 
        So $m_1=m_2=\lceil (m_b + n - 3) / \sumratios{} + 1 / \sumratios{} \rceil= \lceil(7 + 5-3) / 5  + 1/5 \rceil = 2$, which is true.
	\end{enumerate}

Keeping this setting, we can increase or decrease the number of bins, recompute every parameter according to the new $n$, and reach the same conclusion. We built a generic example, such that if we modify $n$, all the other parameters can be recomputed because they are only dependent on $n$. For example, for $n=4$ identical bins, we will have $u=n-2=4, {\cal X} = (n + 2) \cdot [1] = [1, 1, 1, 1, 1, 1]$. Therefore, we reach the same conclusions with the solution $[1,1],[1,1],[1], [1]$.

\subsubsection{Detailed proof} \label{sec:appendix-lem-struct}
In this section, we provide a detailed proof of \Cref{lem:structure}.

Recall that, by the lemma assumption, there exists a 
$v$-feasible partition with some $p\geq 2$ almost-full bins $1,\ldots, p$.
We take an arbitrary such partition and convert it using a sequence of transformations to another $v$-feasible partition satisfying the properties stated in the lemma, as explained below.

\begin{proof}[\textbf{Proof of \Cref{lem:structure}, property \ref{prop:2-almost-full}}]

Pick a $v$-feasible partition
with some $p\geq 3$ almost-full bins.
Necessarily $p\leq n-1$, otherwise each bin has a bin relative sum larger than $S$ (since it is an almost-full bin), implying that the sum of the items is larger than $\sumratios{} S$ which is a contradiction.
It is sufficient to show that there exists a $v$-feasible partition with $p-1$ almost-full bins; then we can proceed by induction down to 2.
Call the almost-full bins $1,\dots,p$ 
and the not-almost-full bins  $p+1,\dots,n$. 
By definition of almost-full bins,
\begin{align*}
r_1b_1 + \dots + r_pb_p \geq (r_1+\dots+r_p)(1+ v - 2\epsilon)\cdot S,
\end{align*}
so,
\begin{align*}
	r_{p+1}b_{p+1} + \dots + r_nb_n & = \sumratios{}S - (r_{1}b_{1} + \dots + r_pb_p) \\
    & < \sumratios{}S - (r_1+\dots+r_p)(1+ v - 2\epsilon)\cdot S \\
    & = (r_{p+1} + \dots + r_n - (r_1+\dots+r_p) (v - 2\epsilon))\cdot S.
\end{align*}
Let $\Delta$ be the total free space in the not-almost-full machines:
\begin{align*}
    \Delta & = \sum^n_{i=p+1}{r_i(1+v)S} - (r_{p+1}b_{p+1} + \dots + r_nb_n)  \\
    & > (1+v) S (r_{p+1} + \dots + r_n) - (r_{p+1} + \dots + r_n - (r_1+\dots+r_p) (v - 2\epsilon))\cdot S \\
    & = ((1+v) (r_{p+1} + \dots + r_n) +  (r_1+\dots+r_p) (v - 2\epsilon) - (r_{p+1} + \dots + r_n))\cdot S \\
    & = ((r_{p+1} + \dots + r_n)\cdot v +  (r_1+\dots+r_p) (v - 2\epsilon))\cdot S \\
    & = (\sumratios{} v - (r_1+\dots+r_p) 2\epsilon)\cdot S \\
\end{align*}
Denote by $\Delta_{k}$ the maximum
free space in a non-almost-full bin (breaking ties arbitrarily). 
By the pigeonhole principle, 
\begin{align*}
\Delta_{k} & \geq 
\frac{(\sumratios{} v - (r_1+\dots+r_p) 2\epsilon)\cdot S }{n-p} \\
& \geq 
\frac{(\sumratios{} v - (r_1+\dots+r_p) 2\epsilon)\cdot S}{n-3} && \text{(since $p \geq 3$)} \\
& \geq 
\frac{(\sumratios{} v - (r_1+\dots+r_{n-1}) 2\epsilon)\cdot S}{n-3} && \text{(since $p \leq n-1$)} \\
\end{align*}

Assume that bin $k$ is the bin with the largest free space $\Delta_{k}$. By assumption, each item is at most $\sumratios{} M$. 
Suppose we take one item from bin $p$, and move it to bin $k$. Then after the transformation:
\begin{align*}
b_k r_k & \leq r_k(1+v)S - \Delta_{k} + \sumratios{}M \\ 
& = r_k(1+v)S - \frac{(\sumratios{} v - (r_1+\dots+r_{n-1}) 2\epsilon)\cdot S}{n-3} + \sumratios{}M \\
& \leq r_k(1+v)S - \frac{(\sumratios{} v - (r_1+\dots+r_{n-1}) 2\epsilon)\cdot S}{n-3} + \frac{\sumratios{}vS}{n-2}  \\
& ~~~~~~~~~~~~~~~~~~~~~~~~~~~~~~~~~~~~~~~~~~~~~~~~~~~~~~  \text{(since $uM=vS$ and $u \geq n-2$ )} \\
& = (r_k(1+v) - \frac{(\sumratios{} v - (r_1+\dots+r_{n-1}) 2\epsilon)}{n-3} + \frac{\sumratios{}v}{n-2}) \cdot S \\
& = (r_k(1+v) - \frac{\sumratios{}v}{(n-3)(n-2)} + \frac{r_1+\dots+r_{n-1}}{n-3} 2\epsilon) \cdot S \\
& = r_{k}(1+v - \frac{\sumratios{}v}{r_{k}(n-3)(n-2)} + \frac{r_1+\dots+r_{n-1}+r_{k}(n-3)}{r_{k}(n-3)}  2\epsilon - 2 \epsilon) \cdot S  \\
& < r_{k}(1+v - \frac{\sumratios{}v}{r_{k}(n-3)(n-2)} + \frac{n \sumratios{}}{r_{k}(n-3)}  2\epsilon - 2 \epsilon) \cdot S  \\
& \leq r_{k}(1 + v - \frac{\sumratios{}v}{r_{k}(n-3)(n-2)} + \frac{\sumratios{}v}{r_k(n-3)(n-2)} - 2\epsilon) \cdot S  \\
& ~~~~~~~~~~~~~~~~~~~~ \text{(since $\epsilon = \frac{v}{4n^2m^2} \leq \frac{v}{2n(n-2)}$)} \\
& < r_k(1 + v - 2\epsilon) \cdot S.
\end{align*}
So, bin $k$ remains not-almost-full. If bin $p$ is still almost-full, then we are in the same situation: we have exactly $p$ almost-full bin.
So we can repeat the argument, and move another item from bin $p$ to the (possibly different) bin with the largest free space. Since the number of items in bin $u$ decreases with each step, eventually, it becomes not-almost-full. 
\end{proof}
This process continues until we have a partition with exactly two almost-full bins, which we assume to be bins 1 and 2.
The not-almost-full bins are bins $3,\ldots,n$.
We now transform the partition so that the sum in each bin $3,\ldots,n$ is bounded within a small interval.

\begin{proof}[\textbf{Proof of \Cref{lem:structure}, property \ref{prop:not-almost-full}}]
	
For proving the lower bound, suppose there is some not-almost-full bin $i \in \{3,\ldots,n\}$
with $b_i < \left(r_i + \left(r_i - \frac{\sumratios{}}{n-2} \right)v - r_i 2\epsilon \right) \cdot S$.
Move an item from one almost-full bin to bin $i$. Its new sum satisfies
\begin{align*}
r_i b_i & < \left(r_i + \left( r_1 - \frac{\sumratios{}}{n-2}\right)v - r_i 2\epsilon \right) \cdot S + \sumratios{}M \\
& =  \left(r_i + \left(r_i -  \frac{\sumratios{}}{n-2} \right)v - r_i 2\epsilon \right) \cdot S  + \frac{\sumratios{}vS}{n-2} && \text{(Since $u\leq n-2$)} \\
& \leq \left(r_i + \left(r_i - \frac{\sumratios{}}{n-2} + \frac{\sumratios{}}{n-2}  \right)v - r_i 2\epsilon \right) \cdot S  \\
& = r_i (1 + v - 2\epsilon) \cdot S,
\end{align*}
so bin $i$ is still not almost-full.
Assumption \ref{prop:no-1-almost-full} implies that bins $1$ and $2$ must still be almost-full. So we are in the same situation and can repeat the argument until the lower bound is satisfied.
 
The upper bound on $r_i b_i$ is proved by subtracting the lower bounds of the other $n-1$ bins from the sum of all items.
The total length of items allocated to the $n-3$ other not-almost-full bins is at least:
\begin{align*}
\sum_{j \in R \setminus \{1, 2, i\}} & \left(\left(r_j + \left(r_j - \frac{\sumratios{}}{n-2} \right)v - r_j 2\epsilon \right) \cdot S \right) \\
	& = S \left ( \sum_{j \in R \setminus \{1, 2, i\}} \left(r_j + \left(r_j - \frac{\sumratios{}}{n-2}  \right)v - r_j 2\epsilon \right) \right) \\
	& = S \left ( \sum_{j \in R \setminus \{1, 2, i\}} (r_j + r_j v  - r_j 2\epsilon ) - \sum_{j \in R \setminus \{1, 2, i\}} \frac{\sumratios{} v }{n-2}  \right)  \\
	& = S \left ( \sum_{j \in R \setminus \{1, 2, i\}} r_j (1 + v - 2\epsilon ) - (n - 3) \frac{\sumratios{} v }{n-2}  \right)  \\
	& =  (1 + v - 2\epsilon ) \cdot S \sum_{j \in R \setminus \{1, 2, i\}} r_j  -   (n - 3) \frac{\sumratios{} v }{n-2} \cdot S \\
	& =  (1 + v - 2\epsilon ) \cdot S \cdot (\sumratios{}-r_1-r_2-r_i)  -   (n - 3) \frac{\sumratios{} v }{n-2} \cdot S 
\end{align*}
And the sum of items allocated to the two almost-full bins is:
\begin{align*}
r_1b_1 + r_2b_2
>(r_1 + r_2)(1 + v - 2 \epsilon) \cdot S && \text{(by almost-full bin sum)} 
\end{align*}
Therefore, 
\begin{align*}
	r_i b_i & \leq  \sumratios{}S - (r_1 + r_2)(1 + v - 2 \epsilon) \cdot S  +  (n - 3) \frac{\sumratios{} v }{n-2} \cdot S \\
    & ~~~~~~~~~~~~~~~~~~~~~~~~~~~~~~~~~~~~~~~~~~~~ - (1 + v - 2\epsilon )  (\sumratios{} - (r_1 + r_2 + r_i)) \cdot S \\
	& =  (\sumratios{} - (r_1 + r_2)(1 + v - 2 \epsilon)  + (n - 3) \frac{\sumratios{} v }{n-2} - (1 + v - 2\epsilon )  (\sumratios{} - (r_1 + r_2 + r_i)) )\cdot S \\
    & =  (\sumratios{} - (r_1 + r_2 + \sumratios{} - (r_1 + r_2 + r_i))(1 + v - 2 \epsilon)  + (n - 3) \frac{\sumratios{} v }{n-2})\cdot S \\
    & =  (\sumratios{} - (\sumratios{} - r_i)(1 + v - 2 \epsilon)  + (n - 3) \frac{\sumratios{} v }{n-2})\cdot S \\
    & =  (r_i + (r_i - \sumratios{} + \frac{(n - 3)\sumratios{}}{n-2})v + (\sumratios{} - r_i) 2 \epsilon) \cdot S \\
    & = (r_i + \left(r_ i - \frac{\sumratios{}}{n-2} \right) v + (\sumratios{} - r_i)  2\epsilon)\cdot S,    
	\end{align*}
	so property \ref{prop:not-almost-full} is satisfied. 
\end{proof}


Next, we transform the partition such that the almost-full bins $1$ and $2$ contain only ``big-items'', which we defined as items larger than 
$\sumratios{}S \left (\frac{v}{n-2} - 2 \epsilon \right)$.

\begin{proof}[\textbf{Proof of \Cref{lem:structure}, property \ref{prop:big-items}}]
Suppose bin 1 or 2 contains an item smaller than $(\frac{v}{n-2} - 2\epsilon) \cdot \sumratios{} S$. We move it to some not-almost-full bin $i$, $i\geq 3$. By property \ref{prop:not-almost-full}, the new sum of items allocated to bin $i$ satisfies
	\begin{align*}
	r_i b_i & \leq 
    \left(r_i + \left(r_i - \frac{\sumratios{}}{n-2} \right) v + (\sumratios{} - r_i) 2\epsilon\right)\cdot S + (\frac{v}{n-2} - 2\epsilon) \cdot \sumratios{} S \\
    & = \left(r_i + \left(r_i - \frac{\sumratios{}}{n-2} + \frac{\sumratios{}}{n-2} \right) v + (\sumratios{} - \sumratios{} - r_i) 2\epsilon\right) \cdot S \\
	& = (1+v-2\epsilon) S,
	\end{align*}
	so bin $i$ remains not-almost-full. 
	Assumption \ref{prop:no-1-almost-full} implies that bins 1 and 2 remain almost-full. So properties \ref{prop:2-almost-full}, \ref{prop:not-almost-full} still hold, and we can repeat the argument.
	Each move increases the number of items on the not-almost-full bins, so the process must end, and eventually all items smaller than $(\frac{v}{n-2} - 2\epsilon) \cdot \sumratios{} S$ are allocated to bins $3,\ldots,n$.
\end{proof}

Next, we transform the partition so that the not-almost-full bins contain only the largest big-items and all the small-items.

\begin{proof}[\textbf{Proof of \Cref{lem:structure}, property \ref{prop:split-big-small-items}}]	
    Suppose there exists some $i \in \{3, \dots, n\}$ for which bin $i$ contains an item $x$ such that $x \geq 2 \epsilon \sumratios{}S$, but $x < y$ for some item $y$ contained by bin $1$ or $2$. We exchange $x$ and $y$.
    The sum of the items allocated to bin $i$ increases by at most $\sumratios{}M - 2\epsilon \sumratios{} S$:
    \begin{align*}
    	r_i b_i & \leq (r_i + \left(r_i - \frac{\sumratios{}}{n-2} \right) v + (\sumratios{} - r_i) 2\epsilon)\cdot S + \sumratios{}M - 2 \epsilon \sumratios{} S \\
    	& \leq (r_i + \left(r_i - \frac{\sumratios{}}{n-2} \right) v + (\sumratios{} - r_i) 2\epsilon) \cdot S + \frac{\sumratios{}vS}{n-2} - 2 \sumratios{} S \epsilon \\
        & ~~~~~~~~~~~~~~~~~~~~~~~~~~~~~~~~~~~~~~~~~~~~~~~~~~~~~~~~ \text{(since $M = vS/u$ and $u \geq n-2$)} \\
    	& = (r_i + \left(r_i - \frac{\sumratios{}}{n-2} + \frac{\sumratios{}}{n-2} \right) v + (\sumratios{} - r_i - \sumratios{}) 2\epsilon) \cdot S \\
    	& = r_i \cdot (1 + v - 2\epsilon)\cdot S,
    \end{align*}
    so bin $i$ is still not almost-full. So by \ref{prop:no-1-almost-full}, bins $1$ and $2$ must remain almost-full,
    and we can repeat the argument.
    Each move decreases the sum of the items at bins $1,2$, so the process must end.
\end{proof}

\begin{proof}[\textbf{Proof of \Cref{lem:structure}, property \ref{prop:litems}}]

By definition of almost-full, the sum of the items in bin $i$ is at least:
\begin{align*}
r_i (1+v- 2\epsilon) S = 
r_i (S + (n-2)(\frac{v}{n-2}-\frac{2\epsilon}{n-2})S) \geq
r_i (S + (n-2)(\frac{v}{n-2}-2\epsilon)S).
\end{align*}
The sum of all items must be at least the number of big-items times the smallest big-item, so,
\begin{align*}
	S \geq m_b \cdot (\frac{v}{n-2}-2\epsilon)S.
\end{align*}
Substituting above gives that the total sum of the items in an almost full bin is at least:
\begin{align*}
r_i (m_b\cdot (\frac{v}{n-2}-2\epsilon)S + (n-2)(\frac{v}{n-2}-2\epsilon)S)
= 
r_i (\frac{v}{n-2}-2\epsilon)(m_b+n-2)S.
\end{align*}
Since the largest item is $\sumratios{} M \leq \frac{v \sumratios{} S}{n-2}$, the number of items contained in an almost full bin is at least:

\begin{align*}
m_i \geq &
\frac{r_i (\frac{v}{n-2}-2\epsilon)(m_b+n-2)S}{\frac{v \sumratios{} S}{n-2}}
=
\frac{r_i (v-2\epsilon(n-2))}{v \sumratios{}} (m_b+n-2)
\\
=&
\frac{r_i}{\sumratios{}}(1-\frac{n-2}{2n^2m^2})(m_b+n-2) ~~~~~~~~~~~~~~~~~~~~ \text{(since $v = 4 \epsilon n^2 m^2$)}
\\
=& 
\frac{r_i}{\sumratios{}}(m_b+n-2)
- 
\frac{r_i}{\sumratios{}}\frac{(m_b+n-2)(n-2)}{2 n^2m^2}
\\
\geq & 
\frac{r_i}{\sumratios{}}(m_b+n-2)
- 
\frac{r_i}{\sumratios{}}\frac{1}{ m}  ~~~~~~~~~~ \text{(since $m_b+n-2\leq 2 m$ and $n-2 \leq n^2$)}
\\
= & 
\frac{r_i}{\sumratios{}}(m_b+n-2 - 1/m)
\\
 > & 
\frac{r_i}{\sumratios{}}(m_b+n-3),
\end{align*}
so $m_i \sumratios{} > r_i (m_b+n-3)$.
As both sides are integers, 
$m_i \sumratios{} \geq r_i (m_b+n-3) + 1$,
so $m_i \geq r_i (m_b+n-3)/\sumratios{} + 1/\sumratios{}$.

Based on the previous properties, we know that every not-almost-full bin will become almost-full if we add into it a single big-item. Therefore,
\begin{align*}
	m_1 &\geq r_1 (m_b+n-3) /\sumratios{} + 1/\sumratios{}
	\\
	m_2 &\geq r_2 (m_b+n-3) /\sumratios{} + 1/\sumratios{}
	\\
	m_i+1 &\geq r_i (m_b+n-3) /\sumratios{} + 1/\sumratios{}&& \text{(for every $i \in \{3, \dots, n\}$)}
\end{align*}
Since $m_b = m_1 + m_2 + \dots+ m_n$, we have:
\begin{align*}
   m_1 + m_2 & = m_b - m_3 - \dots - m_n \\
   & \leq m_b - \frac{r_3}{\sumratios{}} (m_b+n-3)- 1/\sumratios{} + 1 - \dots - \frac{r_n}{\sumratios{}} (m_b+n-3) - 1/\sumratios{} + 1 \\
   & = m_b - \frac{r_3}{\sumratios{}} m_b - \dots - \frac{r_n}{\sumratios{}} m_b  - \frac{r_3}{\sumratios{}} (n-3) - \dots - \frac{r_n}{\sumratios{}} (n-3) - \frac{n-2}{\sumratios{}} + (n-2) \\
   & = m_b\left(1 - \frac{r_3}{\sumratios{}} - \dots - \frac{r_n}{\sumratios{}}\right) + (n-3) \left(-\frac{r_3}{\sumratios{}}- \dots - \frac{r_n}{\sumratios{}}\right) - \frac{n-2}{\sumratios{}} + (n-3) +1\\
   & = m_b \left(\frac{\sumratios{} - r_3 - \dots - r_n}{\sumratios{}}\right) + (n-3) \left(\frac{\sumratios{}-r_3 - \dots - r_n}{\sumratios{}}\right) - (n-2)/\sumratios{} + 1 \\
   & = m_b \left(\frac{r_1+r_2}{\sumratios{}}\right) + (n-3) \left(\frac{r_1+r_2}{\sumratios{}}\right) - \frac{n-2}{\sumratios{}} +1 \\
   & = (m_b + n -3) \left(\frac{r_1+r_2}{\sumratios{}}\right) - \frac{n-2}{\sumratios{}} +1.
\end{align*}
Therefore,
\begin{align*}
    m_1 & \leq (m_b + n -3) \left(\frac{r_1+r_2}{\sumratios{}}\right) - \frac{n-2}{\sumratios{}} +1 - m_2 \\
    & \leq (m_b + n -3) \left(\frac{r_1+r_2}{\sumratios{}}\right) + \frac{2-n}{\sumratios{}} +1 - r_2 (m_b+n-3) /\sumratios{} - 1/\sumratios{} \\
    & = (m_b + n - 3) \frac{r_1}{\sumratios{}} + \frac{1-n}{\sumratios{}} +1 \\
    & = (m_b + n - 3) \frac{r_1}{\sumratios{}} + \frac{1}{\sumratios{}} + \frac{\sumratios{}-n}{\sumratios{}}.
\end{align*}
So, we have:
\begin{align*}
r_1 (m_b+n-3) /\sumratios{} + 1/\sumratios{} 
\leq
m_1
\leq
r_1(m_b + n - 3)/\sumratios{} + 1/\sumratios{} + \frac{\sumratios{}-n}{\sumratios{}}.
\end{align*}
Similarly,
\begin{align*}
r_2 (m_b+n-3) /\sumratios{} + 1/\sumratios{} 
\leq
m_2
\leq
r_2(m_b + n - 3)/\sumratios{} + 1/\sumratios{} + \frac{\sumratios{}-n}{\sumratios{}}.
\end{align*}
The difference between the upper bound and the lower bound is $\frac{\sumratios{}-n}{\sumratios{}}$, which is less than 1. Therefore, since $m_1$ and $m_1$ are natural, there is at most one possible value, proving the claim.
\end{proof}

\section{Running time}\label{sec:running-time}

We compute in this section the running time of \Cref{alg:UniInterval}, \Cref{alg:tfeasible} and \Cref{alg:UnittInterval}. 

The running time of each {\sf FPTAS} is linear in the number of states in the dynamic program. The number of states in the dynamic program is bounded by $L^n$, where $L$ is the number of possible values in every element of the vector. In \cite{woeginger2000does}[Section 3], Woeginger proves that $L \leq \lceil (1+\frac{2 g m}{\epsilon}) \pi_1(m, \log_2 \Bar{x}) \rceil$, 
where $m \in \mathbb{N}$ is the number of items,
$\Bar{x}$ is the sum of all item sizes (which in our case is just $\sumratios{}S$),
and $\pi_1$ is a
function describing the binary length of the values in the state vectors (in our case it is a linear function).

In each {\sf FPTAS} settings, $g = 1$, $\Bar{x} = \sum_{k=1}^m \sum_{i=1}^{\alpha} x_{i,k}$, where $\alpha = 1$, then, $\Bar{x}$ is the sum of all the items. Because every value in the state-vectors is at most the sum of all values, we have the linear function $\pi_1(m, \log_2 \Bar{x}) = \log_2 \Bar{x}$. 

\subsection{\Cref{alg:UniInterval} running time}\label{sec:running-time1}

\Cref{alg:UniInterval} runs an {\sf FPTAS} for 
\partition{\asym\minmax}{$n=2, v, i$}{${\cal X}, {\cal R}$} with  $\epsilon = v/2$ and the sum of the items $\Bar{x}$, is equal to $\sumratios{}S$. Therefore, we have the following:
$L \leq \lceil (1+\frac{2m}{v/2}) \log_2 (\sumratios{}S) \rceil = \lceil (1 + \frac{4m}{v}) \log_2 (\sumratios{}S) \rceil \in O(\frac{m}{v} \log{\sumratios{}S})$.

Therefore, \Cref{alg:UniInterval} running time is $O(\frac{m}{v} \log{\sumratios{}S})$.

\subsection{\Cref{alg:tfeasible} running time}\label{sec:running-time1.1}

\Cref{alg:tfeasible} runs three {\sf FPTAS}, one for \partition{\asym\minmax}{$n$}{${\cal X}, {\cal R}$}, one for \partition{\asym\minmax}{$n, v, i$}{${\cal X}, {\cal R}$} and one for \partition{\asym\minmax}{$n=2, v_1, v_2, i$}{${\cal X}, {\cal R}$} with different settings. 
\begin{itemize}
    \item For \partition{\asym\minmax}{$n$}{${\cal X}, {\cal R}$} and \partition{\asym\minmax}{$n, v, i$}{${\cal X}, {\cal R}$}, $\epsilon = v/(4n^2m^2)$ and the sum of the items $\Bar{x}$, is equal to $\sumratios{}S$.
    \item For \partition{\asym\minmax}{$n=2, v_1, v_2, i$}{${\cal X}, {\cal R}$}, $\epsilon = (r_1v_1+r_2v_2)/(r_1+r_2)$ and the sum of the items $\Bar{x}$, is equal to $\sumratios{}S$.
\end{itemize}
Therefore, we have the following:
\begin{itemize}
    \item For \partition{\asym\minmax}{$n$}{${\cal X}, {\cal R}$} and \partition{\asym\minmax}{$n, v, i$}{${\cal X}, {\cal R}$}, $L \leq \lceil (1+\frac{2m}{v/(4n^2m^2)}) \log_2 (\sumratios{}S) \rceil = \lceil (1 + \frac{8n^2m^3}{v}) \log_2 (\sumratios{}S) \rceil \allowbreak \in O(\frac{m^3}{v} \log{\sumratios{}S})$, resulting in the running time 
    \begin{align*}
    O(\frac{m^3}{v} \log{\sumratios{}S}) 
    & = O(\frac{m^3 \cdot S}{uM} \log{\sumratios{}S})
    && \text{since $v=uM/S$}
    \\
    & = O(\frac{m^3 \cdot S}{M} \log{\sumratios{}S})
    && \text{since $u \geq n-2$}
    \\ 
    & = O(\frac{m^3 \cdot mM}{M} \log{\sumratios{}S})
    &&  \text{since $m M \geq S$}
    \\ 
    & = O(m^4 \log{\sumratios{}S}).
    \end{align*}
    \item For \partition{\asym\minmax}{$n=2, v_1, v_2, i$}{${\cal X}, {\cal R}$}, the running time is computed in \Cref{sec:running-time2} and it is $O(\frac{m}{r_1v_1+r_2v_2} \log{\sumratios{}S})$.
\end{itemize}

Therefore, \Cref{alg:tfeasible} running time is $O(m^4 \log{\sumratios{}S})$.

\subsection{\Cref{alg:UnittInterval} running time}\label{sec:running-time2}

\Cref{alg:UnittInterval} runs an {\sf FPTAS} for \partition{\asym\minmax}{$n=2, v_1, v_2, i$}{${\cal X}, {\cal R}$}. In \partition{\asym\minmax}{$n=2, v_1, v_2, i$}{${\cal X}, {\cal R}$}, $\epsilon = (r_1v_1+r_2v_2)/(r_1+r_2)$ and the sum of the items $\Bar{x}$, is equal to $\sumratios{}S$. Therefore, we have the following:
$L \leq \lceil (1+\frac{2m}{(r_1v_1+r_2v_2)/(r_1+r_2)}) \log_2 (\sumratios{}S) \rceil = \lceil (1 + \frac{2(r_1+r_2)m}{(r_1v_1+r_2v_2)}) \log_2 (\sumratios{} S) \rceil \in O(\frac{m}{r_1v_1+r_2v_2} \log{\sumratios{} S})$.

Therefore, \Cref{alg:UnittInterval} running time is $O(\frac{m}{r_1v_1+r_2v_2} \log{\sumratios{} S})$.

\end{appendices}

\bibliography{references}

\end{document}


\title[Asymmetric Number Partitioning with Splitting and Interval Targets]{Asymmetric Number Partitioning with Splitting and Interval Targets\footnote{This is an extension of a paper that has been presented at the 35th International Symposium on Algorithms and Computation, (ISAAC) and appeared in its proceedings \cite{DBLP:conf/isaac/BismuthMSS24}. The earlier version focused solely on positive results concerning identical bins. We present here a generalization of these results, for asymmetric bins. This version also contains all proofs
omitted from the previous version.}}

\subtitle{Supplementary Information}


\author*[1]{\fnm{Samuel} \sur{Bismuth}}\email{samuelbismuth101@gmail.com}
\equalcont{These authors contributed equally to this work.}


\author[1]{\fnm{Erel} \sur{Segal-Halevi}}\email{erelsgl@gmail.com}
\equalcont{These authors contributed equally to this work.}

\author[1]{\fnm{Dana} \sur{Shapira}}\email{shapird@g.ariel.ac.il}
\equalcont{These authors contributed equally to this work.}

\affil*[1]{\orgdiv{Department of Computer Science}, \orgname{Ariel University}, \orgaddress{\city{Ariel}, \postcode{40700},  \country{Israel}}}





\maketitle

\section{FPTAS-s for Various Partitioning Problems}\label{app:fptas-woeginger}
Our algorithms use FPTAS-s for several variants of the $n$-way number partitioning problem. These FPTAS-s are developed using a general technique described by \cite{woeginger2000does}.
In this section, we briefly describe this technique.
 
The first step is to develop a dynamic programming (DP) algorithm that solves the problem exactly. The algorithm should have a specific format called \emph{simple DP}, defined by the following parameters:
\begin{itemize}
\item [--] The size of the input vectors, $\alpha\in \mathbb{N}$ (in our settings, usually $\alpha=1$ since each input is a single integer);
\item [--] The size of the \emph{state vectors}, $\beta\in \mathbb{N}$  (in our settings, each state represents a partition of a subset of the inputs);
\item [--] A set of initial states $V_0$ (in our settings, $V_0$ usually contains a single state (the zero vector) representing the empty partition);
\item [--] A set of \emph{transition functions} $F$; each function in $F$ accepts a state and an input, and returns a new state (in our settings, each function in $F$ corresponds to putting the new input in one of the machines);
\item [--] A set of \emph{filter functions} $H$; each function $h_j\in H$ corresponds to a function $f_j\in F$. It accepts a state and an input and returns a positive value if the new state returned by $f_j$ is infeasible (infeasible states are kept out of the state space).
\item [--] An \emph{objective function} $G$, that maps a state to a numeric value.
\end{itemize}
The DP algorithm processes the inputs one by one. For each input $k$, it applies every transition function in $F$ to every state in $V_{k-1}$, to produce the new state-set $V_k$. Finally, it picks the state in $V_m$ that minimizes the objective function. Formally:
\begin{enumerate}
\item Let $V_0$ be the set of initial states.
\item For $k := 1,\ldots,m$:
\begin{align*}
V_k := \{
f_j(s,x) 
| 
f_j\in F,~ s\in V_{k-1},~ h_j(s,x)\leq 0
\}
\end{align*}
\item Return 
\begin{align*}
\min\{G(s) | s\in V_m\}
\end{align*}
\end{enumerate}
Every DP in this format can be converted to an {\sf FPTAS} if it satisfies a condition called \emph{critical-coordinate-benevolence} (CC-benevolence).
To prove that a dynamic program is CC-benevolent, we need to define a \emph{degree vector} $D$ of size $\beta$, which determines how much each state is allowed to deviate from the optimal state. 
The functions in $F$, $G$, $H$ and the vector $D$ should satisfy several conditions listed in Lemma 6.1 of \cite{woeginger2000does}.
These conditions use the term \emph{$[D,\Delta]$-close}, defined as follows.

For a real number $\delta > 1$ and two vectors 
$V = [w_1,\dots, w_{\beta}]$ and $V' = [w'_1,\dots, w'_{\beta}]$, we say that $V$ 
is $[D,\Delta]-$close to $V'$ if 
\begin{equation*}
\Delta^{-d_\ell} \cdot w_\ell ~\le~ w'_\ell ~\le~ \Delta^{d_\ell} \cdot w_\ell, \text{ for all } \ell=1,\dots,\beta
\end{equation*}
that is, each coordinate in $V'$ deviates from the corresponding coordinate in $V$ by a multiplicative factor determined by $\Delta$ and by the degree vector $D$.
The $\Delta$ is a factor determined by the required approximation accuracy $\epsilon$.
Note that if some coordinate $\ell$ in $D$ is $0$, then the definition of $[D,\Delta]$-close requires that the coordinate $\ell$ in $V'$ is equal to coordinate $\ell$ in $V$ (no deviation is allowed). 


	






We now use Lemma 6.1 of \cite{woeginger2000does} to prove that the problems \partition{\asym\minmax}{$n,v,i$}{${\cal X}, {\cal R}$} and \partition{\asym\minmax}{$n=2,v_1, v_2,i$}{${\cal X},{\cal R}, m_1, m_2$} are CC-benevolent, and therefore there is an {\sf FPTAS} for solving these problems.


\subsection{\partition{\asym\minmax}{$n,v,i$}{${\cal X}, {\cal R}$}}\label{subsec:appendix-n-way-partition-cc}

Recall that the objective of \partition{\asym\minmax}{$n,v,i$}{${\cal X}, {\cal R}$} is to find an $n$-way partition of ${\cal X}$ minimizing the largest bin relative sum, subject to the constraint that the sum of bin $i$ is at most $(1+v)\cdot S$, where $v\in\mathbb{Q}$ is given in the input and $S = (\sum_{x\in {\cal X}} x)/\sum$ is the average bin relative sum.

\begin{proof}
\textbf{The dynamic program}. 
We define a simple DP with $\alpha = 1$ (the size of the input vectors) and $\beta=n$ (the size of the state vectors).
For $k=1,\dots,m$ define the input vector
$U_k=[x_k]$
where $x_k$ is the size of item $k$. A state $V=[b_1,b_2,\dots, b_n]$ in $V_k$ encodes a partial allocation for the first $k$ items, 
where $b_i$ is the sum of items in bin $i\in\{1,2\dots,n\}$ in the partial allocation. The set $F$ contains $n$ transition functions $f_1,  \dots, f_n$:
\begin{align*}
	& f_1(x_k, b_1, b_2, \dots, b_n) = [b_1+x_k/r_1, b_2, \dots, b_n] \\
	& f_2(x_k, b_1, b_2, \dots, b_n) = [b_1, b_2+x_k/r_2, \dots, b_n] \\
	& \dots \\
	& f_n(x_k, b_1, b_2, \dots, b_n) = [b_1, b_2, \dots, b_n+x_k/r_n] 
\end{align*}
Intuitively, the function $f_i$ corresponds to putting item $k$ in bin $i$. 
The set $H$ contains a function $h_1(x_k, b_1, b_2,\dots,b_n) = b_1 + x_k/r_i - (1+v)\cdot S$,
and functions $h_j(x_k, b_1, b_2, \dots, b_n) \equiv 0$ for $j \in \{n\} \setminus i$.
These functions represent the fact that the relative sum of the bin $i$ must always be at most $(1+v)\cdot S$ (if it becomes larger than $(1+v)\cdot S$, then $h_i$ will return a positive value and the new state will be filtered out).
There are no constraints on the sums of bins that are not $i$.

The initial state space $V_0$ contains a single vector $\{[0,0,\dots,0]\}$. The minimization objective is
\begin{align*}
G(b_1,b_2,\dots,b_n) = \max\{b_1,b_2,\dots,b_n\}.
\end{align*}
\textbf{Benevolence}. 
We show that our problem is \emph{CC-benevolent}, as defined at \cite{woeginger2000does}[Section 6].
We use the degree vector $D=[1,1,\dots,1]$,
where $i$ is the critical coordinate.

All the transition functions are polynomials of degree 1.
The value of the function $f_1(U, V)$ only depends on $x_k$ which is on $U$, and $i$ which is our critical-coordinate.
So, C.1(i) on the function set $F$ is fulfilled.
	
The functions $h_1, h_2, \dots, h_n$ are polynomials; the monomials do not depend on relative sums that are not $b_i$, and the monomial that depends on $b_i$ has a positive coefficient. 
So, C.2(i) on the function set $H$ is fulfilled.
		
If a state $[b_1,b_2,\dots,b_n]$ is $[D,\Delta]$-close to another state $[b_1',b_2',\dots,b_n']$,
then by $[D,\Delta]$-close definition, we have 
$w'_1 \leq \Delta \cdot w_1$,
$w'_2 \leq \Delta \cdot w_2$,
 		\ldots,
$w'_n \leq \Delta \cdot w_n$, so 
$$\max\{b_1',b_2',\dots,b_n'\} \leq \Delta\cdot \max\{b_1,b_2,\dots,b_n\},$$ and 
$G(V') \leq \Delta \cdot G(V)$
Therefore, C.3(i) on the function $G$ is fulfilled
(with degree $g=1$).

The definition in \cite{woeginger2000does} also allows a \emph{domination relation}, but we do not need it in our case (formally, our domination relation $\preceq_{dom}$ is the trivial relation).
So, the statement conditions C.1(ii), C.2(ii), C.3(ii) are fulfilled.

Condition C.4 (i) holds since all functions in $F$ can be evaluated in polynomial time.
C.4 (ii) holds since the cardinality of $F$ is a constant.
C.4 (iii) holds since the cardinality of $V_0$ is a constant.
C.4(iv) is satisfied since the value of the coordinates is upper bounded by the sum of the items $\sumratios{} S$, so their logarithm is bounded by the size of the input.
Hence, our problem is CC-benevolent. 
By the main theorem of \cite{woeginger2000does}, it has an {\sf FPTAS}.
\end{proof}

\subsection{\partition{\asym\minmax}{$n=2,v_1, v_2,i$}{${\cal X},{\cal R}, m_1, m_2$}}\label{subsec:appendix-card-partition-cc}

Recall that the objective is to find a $2$-way partition of ${\cal X}$ minimizing the largest bin relative sum, subject the the constraint that the sum of bin $i$ is at most $(1+v_i)\cdot S$, and additionally, bin $q$ must contain exactly $m_q$ items.

\begin{proof}
\textbf{The dynamic program}. We define a simple DP with $\alpha = 1$ (the size of the input vectors) and $\beta=4$ (the size of the state vectors).
For $k=1,\dots,m$ define the input vector 
$U_k=[x_k]$
, where $x_k$ is the size of item $k$. A state 
$V = [b_1,b_2, l_1, l_2]$ in $V_k$
encodes a partial allocation for the first $k$ items, 
where $b_i$ is the relative sum of items in bin $i\in\{1,2\}$ in the partial allocation and $l_i$ is the number of items in bin $i$. The set $F$ contains two transition functions $f_1$ and $f_2$:
\begin{align*}
	f_1(x_k, b_1, b_2, l_1, l_2) = [b_1+x_k/r_1, b_2, l_1 + 1, l_2] \\
	f_2(x_k, b_1, b_2, l_1, l_2) = [b_1, b_2+x_k/r_2, l_1, l_2 + 1]
\end{align*}
Intuitively, the function $f_i$ corresponds to putting item $k$ in bin $i$. 
The set $H$ contains a function $h_i(x_k, b_1, b_2, l_1, l_2) = b_1 + x_k/r_i - (1+v)S$ and $h_j(x_k, b_1, b_2, l_1, l_2) \equiv 0$ for $j \in [2] \setminus i$. These functions represent the fact that the relative sum of bin $i$ must always be at most $(1+v)S$ (if it becomes larger than $(1+v)S$, then $h_i$ will return a positive value and the new state will be filtered out). There are no constraints on the relative sum of bins that are not $i$.

The initial state space $V_0$ contains a single vector $\{[0,0,\dots,0]\}$.
Finally, set the minimization objective to
\begin{align*}
	G(b_1,b_2,l_1,l_2) = 
	\begin{cases}
	\max\{b_1,b_2\},& \text{if } l_1 = m_1 \text{ and } l_2 = m_2\\
	\infty,              & \text{otherwise}
	\end{cases}
\end{align*}

\textbf{Benevolence}. 
We show that our problem is \emph{CC-benevolent}, as defined at \cite{woeginger2000does}[Section 6].

We define the degree vector as $D=[1,1,0,0]$.  Note that the third and fourth coordinates correspond to the number of items in each bin, for which we need an exact number and not an approximation. 

Lemma 4.1(i) is satisfied since if a state $[b_1,b_2,l_1,l_2]$ is $[D,\Delta]$-close to another state $[b_1',b_2',l_1',l_2']$, by definition of $[D,\Delta]$-close [Section 2](2.1), we 
must have $l_1=l_1'$ and $l_2=l_2'$ because the degree of coordinates 3 and 4 is $0$. 
At coordinates 1 and 2, both transition functions are polynomials of degree 1.
Furthermore, the value of the function $f_1(U, V)$ only depends on $x_k$ which is on $U$, and $b_1$ which is our critical-coordinate.
So, C.1(i) on the function set $F$ is fulfilled.
	
The functions $h_1$ and $h_2$ are polynomials; the monomials do not depend on $b_2$, and the monomial that depends on $b_1$ has a positive coordinate.
So, C.2(i) on the function set $H$ is fulfilled.
	
If a state $[b_1,b_2,l_1,l_2]$ is $[D,\Delta]$-close to another state $[b_1',b_2',l_1',l_2']$ (where $\Delta$ is a factor determined by the required approximation accuracy $\epsilon$), by $[D,\Delta]$-close definition, we have 
$l_1=l_1'$ and $l_2=l_2'$ because the degree of coordinates 3 and 4 is $0$, also
$b'_1 \leq \Delta b_1$ and 
$b'_2 \leq \Delta b_2$, so 
$\max\{b_1',b_2'\} \leq \Delta \max\{b_1,b_2\}$, so
$G(V') \leq \Delta^{g} G(V)$.
Therefore, C.3(i) on the function $G$ is fulfilled
(with degree $g=1$).

Again we do not need a domination relation, so we use the trivial relation $\preceq_{dom}$. 
Then the statement conditions C.1(ii), C.2(ii), C.3(ii) are fulfilled.

Condition C.4 (i) holds since all functions in $F$ can be evaluated in polynomial time.
C.4 (ii) holds since the cardinality of $F$ is a constant.
C.4 (iii) holds since the cardinality of $V_0$ is a constant.
C.4(iv) is satisfied for coordinates 1, 2 since their value is upper bounded by the sum of the items $\sumratios{} S$, so their logarithm is bounded by the size of the input.
For coordinates 3, 4 (whose degree is 0), the condition is satisfied since their value is upper bounded by the number of items $m$.
Hence, our problem is CC-benevolent. 
By the main theorem of \cite{woeginger2000does}, it has an {\sf FPTAS}.
\end{proof}

\section{Additional Plots}
\label{sec:appendix-plots}
We present the plots obtained for different values of the parameters presented in the main paper experiments. All the plots behave similarly, showing that the difference between the optimal and the perfect partition decreases when the number of split items increases.

\begin{figure}[H]
\begin{center}
\includegraphics[scale=0.42]{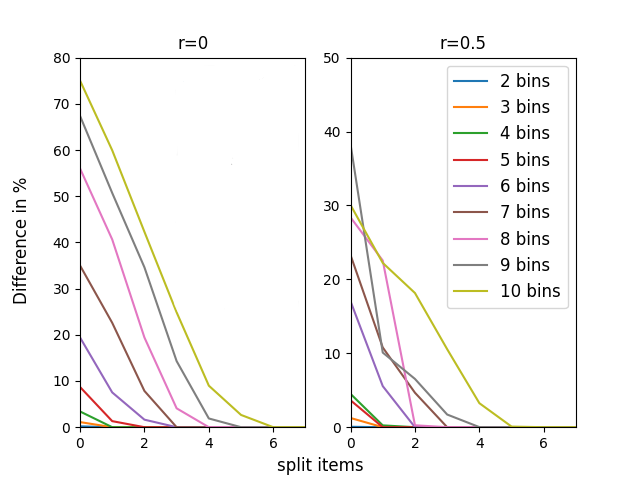}
\end{center}
\caption{Uniform distribution with $m=10$ and 16 bits.
}
\end{figure}

\begin{figure}[H]
\begin{center}
\includegraphics[scale=0.42]{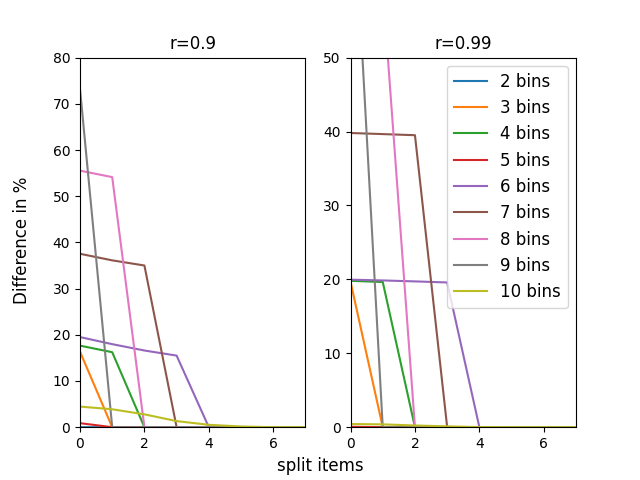}
\end{center}
\caption{Uniform distribution with $m=10$ and 16 bits.
}
\end{figure}

\begin{figure}[H]
\begin{center}
\includegraphics[scale=0.42]{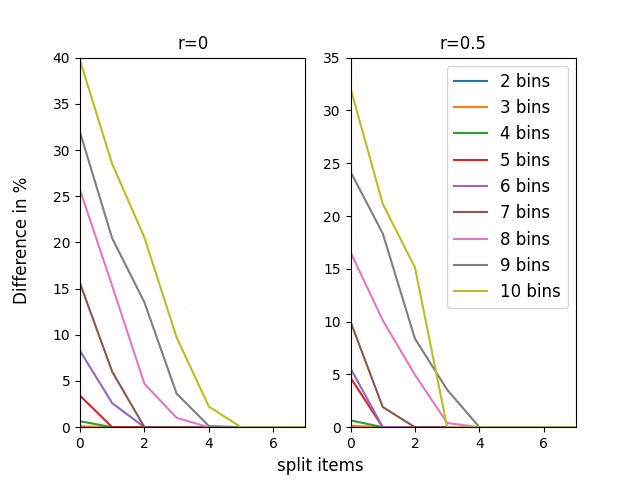}
\end{center}
\caption{Uniform distribution with $m=13$ and 16 bits.
}
\end{figure}

\begin{figure}[H]
\begin{center}
\includegraphics[scale=0.42]{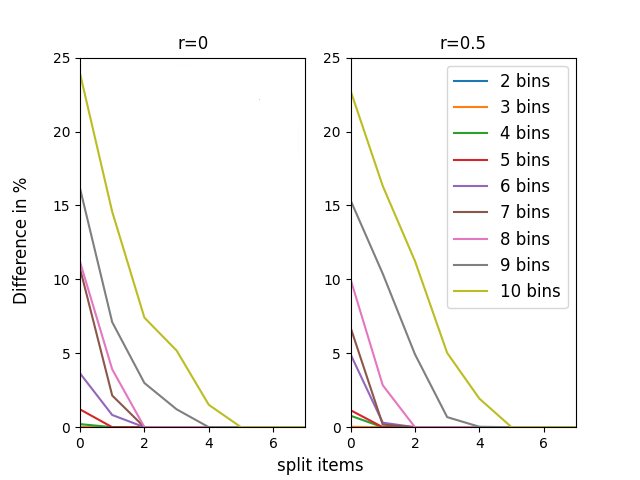}
\end{center}
\caption{Uniform distribution with $m=15$ and 16 bits.
}
\end{figure}

\begin{figure}[H]
\begin{center}
\includegraphics[scale=0.42]{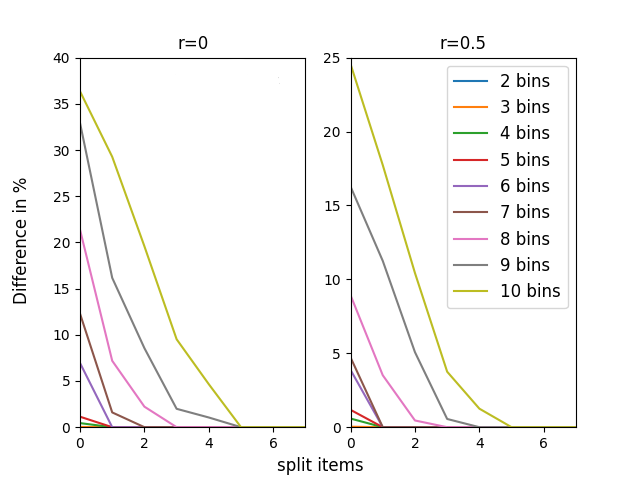}
\end{center}
\caption{Uniform distribution with $m=15$ and 32 bits.
}
\end{figure}

\begin{figure}[H]
\begin{center}
\includegraphics[scale=0.42]{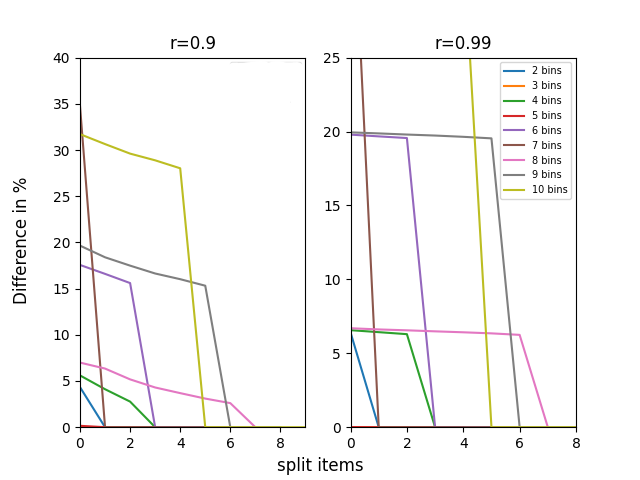}
\end{center}
\caption{Uniform distribution with $m=15$ and 32 bits.
}
\end{figure}

\begin{figure}[H]
\begin{center}
\includegraphics[scale=0.42]{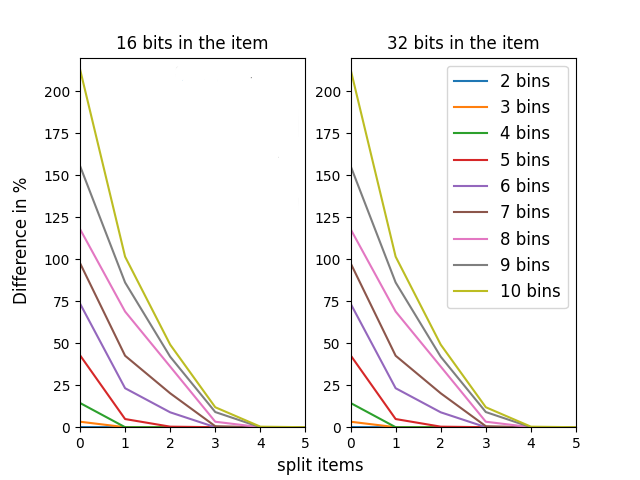}
\end{center}
\caption{Exponential distribution with $m=10$.
}
\end{figure}

\begin{figure}[H]
\begin{center}
\includegraphics[scale=0.42]{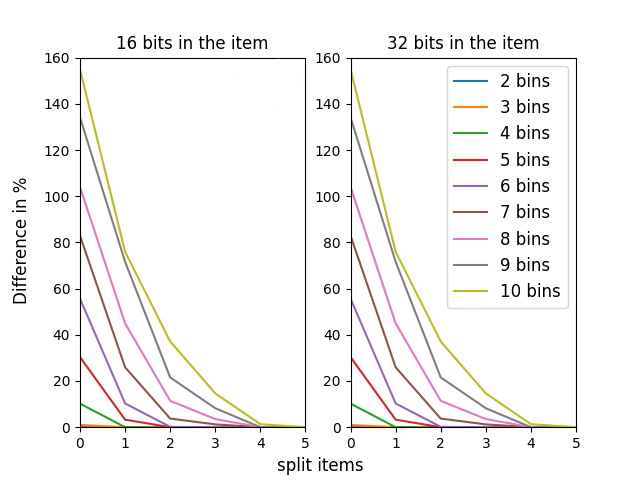}
\end{center}
\caption{Exponential distribution with $m=13$.
}
\end{figure}

\begin{figure}[H]
\begin{center}
\includegraphics[scale=0.42]{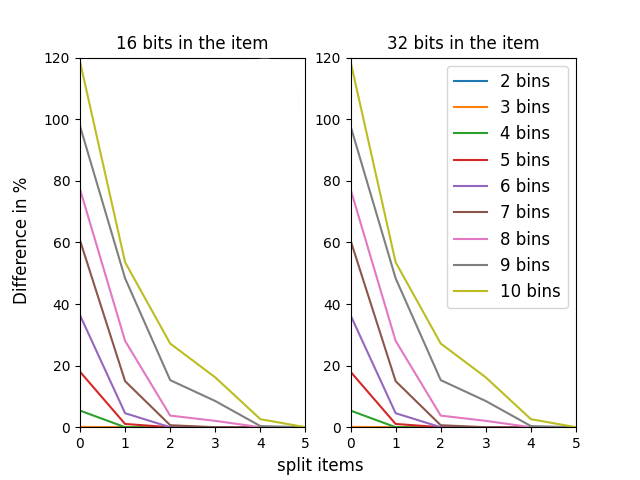}
\end{center}
\caption{Exponential distribution with $m=15$.
}
\end{figure}

\bibliography{references}